\definecolor{Green}{cmyk}{1,0,0.5,0.3}
\definecolor{orange}{cmyk}{0,0.82,1,0.01}
\definecolor{Purple}{cmyk}{0.65,1,0,0}
\def\@listi{\leftmargin\leftmargini
            \parsep 0\p@ \@plus1.5\p@ \@minus0\p@
            \topsep 2\p@   \@plus2\p@ \@minus2\p@
            \itemsep0\p@ \@plus1.5\p@ \@minus0\p@}
\let\@listI\@listi
\newif\if@qeded
\def\qed{\hfill$\Box$\global\@qededtrue}
\def\qedneeded{\global\@qededfalse}
\def\qedifneeded{\if@qeded\else\qed\fi}
\newtheorem{theo}{Theorem}[section]
\newtheorem{conj}[theo]{Conjecture}
\newtheorem{defi}[theo]{Definition}
\newtheorem{lemm}[theo]{Lemma}
\newtheorem{exam}[theo]{Example}
\newtheorem{obse}[theo]{Observation}
\newenvironment{definitionc}[1]{\begin{defi}[#1] \rm }{\end{defi}}
\newenvironment{conjecture}    {\begin{conj}     \rm }{\end{conj}}
\newenvironment{theoremc}[1]   {\begin{theo}[#1] \rm }{\end{theo}}
\newenvironment{lemma}         {\begin{lemm}     \rm }{\end{lemm}}
\newenvironment{example}       {\begin{exam}     \rm }{\end{exam}}
\newenvironment{examplec}[1]   {\begin{exam}[#1] \rm }{\end{exam}}
\newenvironment{observation}   {\begin{obse}     \rm }{\end{obse}}
\newenvironment{proof}{\qedneeded\begin{trivlist} \item[\hspace{\labelsep}\bf Proof:]}
               {\qedifneeded\end{trivlist}}
\newcommand{\df}[1]{Definition~\ref{df:#1}}
\newcommand{\dfs}[1]{Definitions~#1}
\newcommand{\lem}[1]{Lemma~\ref{lem:#1}}
\newcommand{\thm}[1]{Theorem~\ref{thm:#1}}
\newcommand{\obs}[1]{Observation~\ref{obs:#1}}
\newcommand{\ex}[1]{Example~\ref{ex:#1}}
\newcommand{\Sec}[1]{Section~\ref{sec:#1}}
\newcommand{\app}[1]{Appendix~\ref{app:#1}}
\newcommand{\fig}[1]{Figure~\ref{fig:#1}}
\newcommand{\tab}[1]{Table~\ref{tab:#1}}
\newcommand{\its}{\ensuremath{(\Sigma,\Sigma_\TT)}\xspace}
\newcommand{\itt}{\ensuremath{(\Sigma,\RR)}\xspace}
\newcommand{\ite}{\ensuremath{(\Sigma_\PP,\Sigma_\TT)}\xspace}
\newcommand{\tss}{\ensuremath{(\Sigma,\RR,\scn)}\xspace}
\newcommand{\tsss}{\ensuremath{(\Sigma,\RR,\scn,\UU)}\xspace}
\newcommand{\lit}{\ensuremath{\Sigma}\xspace}
\newcommand{\plat}[1]{\raisebox{0pt}[0pt][0pt]{#1}} 
\DeclareSymbolFont{frenchscript}{OMS}{ztmcm}{m}{n}    
\DeclareMathSymbol{\Lab}{\mathord}{frenchscript}{76}  
\DeclareMathSymbol{\Pow}{\mathord}{frenchscript}{80}  
\DeclareMathSymbol{\R}{\mathord}{frenchscript}{82}    
\DeclareMathAlphabet{\altmathcal}{OMS}{cmsy}{m}{n}    
\DeclareMathAlphabet{\mathbbm}{U}{bbm}{m}{n}          
\newcommand{\IN}{\mathbbm{N}}                         
\newcommand{\djcup}{\mathbin{\mathaccent\cdot\cup}} 
\newcommand{\obis}[2]{\mathrel{_{#1}\,                              
  \raisebox{.3ex}{$\underline{\makebox[.7em]{$\leftrightarrow$}}$}
  \,_{#2}}}
\newcommand{\bis}[1]{\obis{}{#1}}                                   
\newcommand{\bisep}{\bis{\mathit{ep}}}                              
\newcommand{\Tr}{\mathit{Tr}}                         
\newcommand{\source}{\mathit{source}}                 
\newcommand{\target}{\mathit{target}}                 
\newcommand{\en}{\mathit{en}}                         
\newcommand{\goto}[1]{\stackrel{#1}{\longrightarrow}} 
\newcommand{\PP}{\altmathcal{P}}                
\newcommand{\Var}{\altmathcal{V}_\PP}           
\newcommand{\var}{\mathit{var}}                 
\newcommand{\Op}{\mathit{Op}}                   
\newcommand{\rec}[1]{\mathbbm\langle #1\rangle} 
\newcommand{\IP}{\mathbbm{P}}                   
\newcommand{\cP}{\mathrm{P}}                    
\DeclareMathSymbol{\B}{\mathord}{frenchscript}{66}            
\DeclareMathSymbol{\Ch}{\mathord}{frenchscript}{67}           
\DeclareMathSymbol{\Sig}{\mathord}{frenchscript}{83}          
\newcommand{\ABCdE}{ABCdE\xspace}                             
\newcommand{\signals}{\mathrel{\hat{}\!}}                     
\newcommand{\actsyn}[1]{\mathord{\stackrel{#1}{\rightarrow}}} 
\newcommand{\sigsyn}[1]{^{\rightarrow #1}}                    
\newcommand{\Left}{\mathrm{\scriptscriptstyle L}}             
\newcommand{\Right}{\mathrm{\scriptscriptstyle R}}            
\newcommand{\sR}{O}               
\newcommand{\RR}{\altmathcal{R}}  
\newcommand{\UU}{\altmathcal{U}}  
\newcommand{\scr}{r}              
\newcommand{\snr}{\textsc{\textcolor{blue}{r}}}     
\newcommand{\sns}{\textsc{\textcolor{blue}{s}}}     
\newcommand{\scn}{\textsc{n}}     
\newcommand{\NN}{\altmathcal{N}}  
\newcommand{\Act}{\mathit{Act}}   
\newcommand{\Actc}{\mathit{Act}_{CCS}}
\newcommand{\In}{\mathit{In}}     
\newcommand{\aconc}{\mathrel{\mbox{$\smile\hspace{-1.25ex}\raisebox{3pt}{$\scriptscriptstyle\bullet$}$}}}
\newcommand{\naconc}{\mathrel{\mbox{$\,\not\hspace{-1pt}\smile\hspace{-1.25ex}\raisebox{3pt}{$\scriptscriptstyle\bullet$}$}}}     
\newcommand{\VV}{\altmathcal{V}}                            
\newcommand{\px}{\mathit{px}}                               
\newcommand{\TT}{\altmathcal{T}}                            
\newcommand{\TVar}{\altmathcal{V}_\TT}                      
\newcommand{\Tvar}{\mathit{var}_{\scriptscriptstyle\!\TT}}  
\newcommand{\tx}{\mathit{tx}}                               
\newcommand{\ty}{\mathit{ty}}                               
\newcommand{\tz}{\mathit{tz}}                               
\newcommand{\ux}{\mathit{ux}}                               
\newcommand{\vx}{\mathit{vx}}                               
\newcommand{\IT}{\mathbbm{T}}                               
\newcommand{\cT}{\mathrm{T}}                                
\newcommand{\Osrc}{\mathit{src}_\circ}                      
\newcommand{\Otar}{\mathit{tar}_\circ}                      
\newcommand{\Oell}{\ell_\circ}                              
\newcommand{\Oen}{\en_\circ}                                
\newcommand{\Vtar}[1][\mathring t]{T_{\scriptstyle #1}}     
\newcommand{\ITE}{\mathbbm{T\!E}}                           
\newcommand{\RecAct}{\mathit{rec}_\Act}                     
\newcommand{\RecIn}{\mathit{rec}_\In}                       
\newcommand{\Tsigma}{\sigma_{\scriptscriptstyle\!\TT}}      
\newcommand{\Trho}{\rho_{\scriptscriptstyle\!\TT}}          
\newcommand{\Trhoi}{\rho_{{\scriptscriptstyle\!\TT}i}}      
\newcommand{\Trhoj}{\rho_{{\scriptscriptstyle\!\TT}j}}      
\newcommand{\Tnu}{\nu_{\scriptscriptstyle\!\TT}}            
\newcommand{\Tnui}{\nu_{{\scriptscriptstyle\!\TT}i}}        
\newcommand{\Tnuj}{\nu_{{\scriptscriptstyle\!\TT}j}}        
\newcommand{\mylabelA}[2]{
  \expandafter\ifx\csname #1\endcsname\relax
    \expandafter\gdef\csname #1\endcsname{#2} \hypertarget{lab:#1}{~{\color{blue} \csname #1\endcsname}}
  \else
    \ClassError{Weiyou}{Label #1 has already been used}{}
  \fi}
\newcommand{\myrefA}[1]{\hyperlink{lab:#1}{{\color{blue} \csname #1\endcsname}}}
\newcommand{\mylabelS}[1]{\hypertarget{lab:#1}{~{\color{orange} #1}}}
\newcommand{\myrefS}[1]{\hyperlink{lab:#1}{{\color{orange} #1}}}
\newcommand{\myrefD}[2]{\hyperlink{lab:#1}{{\color{orange} #2}}}
\def\titlerunning{A Lean-Congruence Format for EP-Bisimilarity}
\title\titlerunning
\author{Rob van Glabbeek%
  \thanks{Supported by Royal Society Wolfson Fellowship RSWF{\textbackslash}R1{\textbackslash}221008}\,
  \href{https://orcid.org/0000-0003-4712-7423}{\includegraphics[scale=.04]{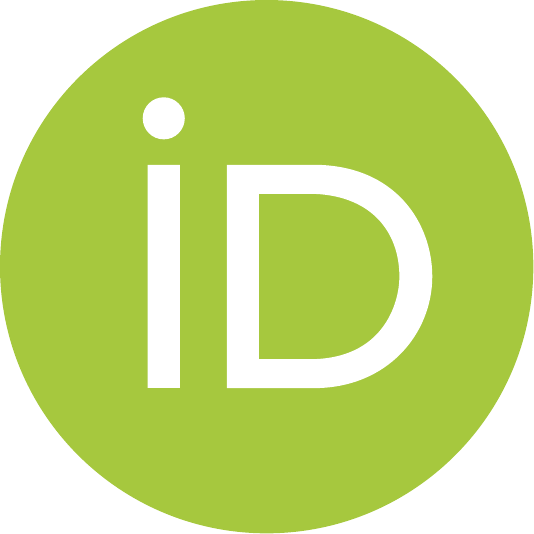}}
  \institute{School of Informatics\\ University of Edinburgh, UK}
  \institute{School of Computer Science and Engineering\\ University of New South Wales\\ Sydney, Australia}
  \email{rvg@cs.stanford.edu}
\and Peter H\"ofner\,%
  \href{https://orcid.org/0000-0002-2141-5868}{\includegraphics[scale=.04]{orcid.pdf}}
  \qquad\qquad Weiyou Wang
  \institute{School of Computing\\ Australian National University\\ Canberra, Australia}
  \email{peter.hoefner@anu.edu.au}\email{weiyou.wang@anu.edu.au}
}
\begin{document}
\maketitle 

\begin{abstract}
Enabling preserving bisimilarity is a refinement of strong bisimilarity, which preserves safety as well as liveness properties. 
To define it properly, labelled transition systems needed to be upgraded with a successor relation,
capturing concurrency between transitions enabled in the same state. We enrich the well-known De
Simone format to handle inductive definitions of this successor relation.
We then establish that ep-bisimilarity is a congruence for the operators, as well as lean congruence for recursion, for all (enriched) De Simone languages.
\end{abstract}

\section{Introduction\label{sec:intro}}
Recently, we introduced a finer alternative to strong bisimilarity, called enabling preserving bisimilarity. 
The motivation behind this concept was to preserve liveness properties, which are \emph{not} always preserved
by classical semantic equivalences, including strong bisimilarity.

\begin{examplec}{\cite{GHW21ea}}\label{ex:filippo}
Consider the following two programs, and assume that all variables are initialised to \lstinline{0}.
\vspace{-7.5mm}
\begin{center}\hfill\hfill\hfill
	\begin{minipage}[t]{0.3\textwidth}
\begin{lstlisting}[basicstyle=\fontsize{8}{9}\selectfont]
while(true) do
  choose
    if true then y := y+1;
    if x = 0 then x := 1;
  end
od
\end{lstlisting}
	\end{minipage}\hfill\hfill
        \begin{minipage}[t]{0.15\textwidth}
          \expandafter\ifx\csname graph\endcsname\relax
   \csname newbox\expandafter\endcsname\csname graph\endcsname
\fi
\ifx\graphtemp\undefined
  \csname newdimen\endcsname\graphtemp
\fi
\expandafter\setbox\csname graph\endcsname
 =\vtop{\vskip 0pt\hbox{%
\pdfliteral{
q [] 0 d 1 J 1 j
0.576 w
0.576 w
24.048 -4.824 m
24.048 -7.488222 21.888222 -9.648 19.224 -9.648 c
16.559778 -9.648 14.4 -7.488222 14.4 -4.824 c
14.4 -2.159778 16.559778 0 19.224 0 c
21.888222 0 24.048 -2.159778 24.048 -4.824 c
S
0.072 w
q 0 g
7.2 -3.024 m
14.4 -4.824 l
7.2 -6.624 l
7.2 -3.024 l
B Q
0.576 w
0 -4.824 m
7.2 -4.824 l
S
0.072 w
q 0 g
26.424 -14.544 m
22.608 -8.208 l
28.944 -12.024 l
26.424 -14.544 l
B Q
0.576 w
15.84 -8.208 m
10.332 -13.716 l
6.58656 -17.46144 4.824 -21.528 4.824 -26.424 c
4.824 -31.32 9.432 -33.624 19.224 -33.624 c
29.016 -33.624 33.624 -31.32 33.624 -26.424 c
33.624 -21.528 31.94208 -17.54208 28.368 -13.968 c
23.112 -8.712 l
S
Q
}%
    \graphtemp=.5ex
    \advance\graphtemp by 0.547in
    \rlap{\kern 0.267in\lower\graphtemp\hbox to 0pt{\hss \footnotesize $y:=y+1$\hss}}%
\pdfliteral{
q [] 0 d 1 J 1 j
0.576 w
72 -4.824 m
72 -7.488222 69.840222 -9.648 67.176 -9.648 c
64.511778 -9.648 62.352 -7.488222 62.352 -4.824 c
62.352 -2.159778 64.511778 0 67.176 0 c
69.840222 0 72 -2.159778 72 -4.824 c
S
0.072 w
q 0 g
55.224 -3.024 m
62.424 -4.824 l
55.224 -6.624 l
55.224 -3.024 l
B Q
0.576 w
23.976 -4.824 m
55.224 -4.824 l
S
Q
}%
    \graphtemp=\baselineskip
    \multiply\graphtemp by -1
    \divide\graphtemp by 2
    \advance\graphtemp by .5ex
    \advance\graphtemp by 0.067in
    \rlap{\kern 0.600in\lower\graphtemp\hbox to 0pt{\hss \footnotesize $x:=1$\hss}}%
\pdfliteral{
q [] 0 d 1 J 1 j
0.576 w
0.072 w
q 0 g
74.448 -14.544 m
70.56 -8.208 l
76.968 -12.024 l
74.448 -14.544 l
B Q
0.576 w
63.792 -8.208 m
58.284 -13.716 l
54.53856 -17.46144 52.776 -21.528 52.776 -26.424 c
52.776 -31.32 57.384 -33.624 67.176 -33.624 c
76.968 -33.624 81.576 -31.32 81.576 -26.424 c
81.576 -21.528 79.9056 -17.54208 76.356 -13.968 c
71.136 -8.712 l
S
Q
}%
    \graphtemp=.5ex
    \advance\graphtemp by 0.547in
    \rlap{\kern 0.933in\lower\graphtemp\hbox to 0pt{\hss \footnotesize $y:=y+1$\hss}}%
    \hbox{\vrule depth0.547in width0pt height 0pt}%
    \kern 1.114in
  }%
}%

          \vspace{1ex}
          \box\graph
        \end{minipage}\hfill\hfill\hfill
	\begin{minipage}[t]{0.15\textwidth}
\footnotesize\begin{lstlisting}[basicstyle=\fontsize{8}{9}\selectfont]
while(true) do
  y := y+1;
od
\end{lstlisting}
	\end{minipage}
	\raisebox{-27pt}{\scalebox{1.3}[2.8]{$\|$}}
	\begin{minipage}[t]{0.1\textwidth} 
\footnotesize\begin{lstlisting}[basicstyle=\fontsize{8}{9}\selectfont]
x := 1;
\end{lstlisting}
	\end{minipage}
	\hfill\mbox{}
\end{center}
\vspace{-7.5mm}
\noindent The code on the left-hand side presents a non-terminating while-loop offering an internal nondeterministic choice.
The conditional \lstinline{if x = 0 then x := 1} describes an atomic read-modify-write operation.\footnote{\url{https://en.wikipedia.org/wiki/Read-modify-write}}
Since the non-deterministic choice does not guarantee to ever pick the second conditional, this example should not satisfy the liveness property `eventually \lstinline{x=1}'.

The example on the right-hand side is similar, but here two different components handle the variables \lstinline{x} and 
\lstinline{y} separately. The two programs should be considered independent -- by default we assume they are executed on different cores. Hence the property `eventually \lstinline{x=1}' should hold.

The two programs behave differently with regards to (some) liveness properties. However, it is easy to
verify that they are strongly bisimilar, when considering the traditional modelling of such code in
terms of transition systems. In fact, their associated transition systems, also displayed above, are identical. 
Hence, strong bisimilarity does not preserve all liveness properties.
\end{examplec}

\noindent
Enabling preserving bisimilarity (ep-bisimilarity) --~see next section for a formal definition~-- distinguishes these examples 
and preserves liveness.
In contrast to classical  bisimulations, which are relations of type $\text{States}\times \text{States}$,
this equivalence is based on triples. 
An ep-bisimulation additionally maintains for each pair of related states $p$ and $q$ a relation $R$ between the
transitions enabled in $p$ and $q$, and this relation should be preserved when matching related
  transitions in the bisimulation game. When formalising this, we need transition systems upgraded
  with a \emph{successor relation} that matches each transition $t$ enabled in a state $p$ to a
  transition $t'$ enabled in $p'$, when performing a transition from $p$ to $p'$ that does not affect
  $t$. Intuitively, $t'$ describes the same system behaviour as $t$, but the two transitions could be
  formally different as they may have different sources.
It is this successor relation that distinguishes the transition systems in the example~above.

In~\cite{GHW21ea}, we showed that ep-bisimilarity is a congruence for all operators of Milner's
Calculus of Communication Systems (CCS), enriched with a successor relation. We extended this result to the Algebra of Broadcast Communication with discards and Emissions (\ABCdE), an extension of CCS with broadcast communication, discard actions and signal emission. 
\ABCdE\ subsumes many standard process algebras found in the literature.

In this paper, we introduce a new congruence format for structural operational semantics, which is based on the well-known De Simone Format and respects the successor relation. This format allows us to generalise the results of \cite{GHW21ea} in two ways:
first, we prove that ep-bisimilarity is a congruence for all operators of \emph{any} process
algebras that can be formalised in the De Simone format with successors. Applicable languages include CCS and \ABCdE. 
Second, we show that ep-bisimilarity is a lean congruence for recursion~\cite{vG17b}. 
Here, a lean congruence preserves equivalence when replacing closed subexpressions of a process by equivalent alternatives.

\section{Enabling Preserving Bisimilarity\label{sec:ep}}

To build our abstract theory of De Simone languages and De Simone formats, we briefly recapitulate the definitions of labelled transition systems with successors, and ep-bisimulation. A detailed description can be found in \cite{GHW21ea}.

  A \emph{labelled transition system (LTS)}\index{labelled transition system} is a tuple
 $(S,\Tr,\source,\target,\ell)$ with
    $S$ and $\Tr$ sets of \emph{states}\index{state} and \emph{transitions}\index{transition},
    $\source,\target:\Tr\to S$\index{source}\index{target} and $\ell:\Tr\to\Lab$\index{transition label},
    for some set $\Lab$ of transition labels.
  A transition $t\in\Tr$ of an LTS is \emph{enabled}\index{enabled} in a state $p\in S$ if $\source(t)=p$.
  The set of transitions enabled in $p$ is~$\en(p)$.

\begin{definitionc}{LTSS \cite{GHW21ea}}\label{df:LTSS}\upshape
  A \emph{labelled transition system with successors\index{LTSS} (LTSS)}
    is a tuple $(S,\Tr,\source,\linebreak[1]\target,\ell,\leadsto)$ with
    $(S,\Tr,\source,\target,\ell)$ an LTS and
    ${\leadsto} \subseteq \Tr\times\Tr\times\Tr$ the \emph{successor relation}\index{successor}
    such that if $(t,u,v)\in{\leadsto}$ (also denoted by $t\leadsto_{u}v$) then $\source(t)=\source(u)$ and $\source(v)=\target(u)$.
\end{definitionc}

\begin{example}\label{ex:fillipo2}
Remember that the `classical' LTSs of \ex{filippo} are identical.
Let $t_1$ and $t_2$ be the two transitions 
corresponding to \lstinline{y:=y+1} in the first and second state, respectively, and let $u$ be the transition 
for assignment \lstinline{x:=1}. The assignments of \lstinline{x} and \lstinline{y} in the right-hand program are independent, hence
$t_1\leadsto_{u} t_2$ and $u\leadsto_{t_1} u$. For the other program, the situation is different: as the instructions correspond to a single component (program), all transitions affect each other, i.e. ${\leadsto} = \emptyset$.
\end{example}

\begin{definitionc}{Ep-bisimilarity \cite{GHW21ea}}\label{df:ep-bisimilarity}\upshape
Let $(S,\Tr,\source,\target,\ell,\leadsto)$ be an LTSS\@. An \emph{enabling preserving bisimulation (ep-bisimulation)}\index{ep-bisimulation} is a relation
    $\R \subseteq S\times S\times\Pow(\Tr\times\Tr)$ satisfying
    \begin{enumerate}
      \item if $(p,q,R)\in\R$ then $R \subseteq \en(p)\times\en(q)$ such that\label{epb1}
              \begin{enumerate}[label=\alph*{\;\!.},ref=\theenumi.\alph*]
                \item $\forall t\in\en(p).~ \exists\, u\in\en(q).~ t \mathrel{R} u$,\label{epb1a}
                \item $\forall u\in\en(q).~ \exists\, t\in\en(p).~ t \mathrel{R} u$,\label{epb1b} and
                \item if $t \mathrel{R} u$ then $\ell(t)=\ell(u)$;\label{epb1c} and
              \end{enumerate}
      \item if $(p,q,R)\in\R$ and $v \mathrel{R} w$, then $(\target(v),\target(w),R')\in\R$ for some $R'$ such that\label{epb2}
              \begin{enumerate}[label=\alph*\;\!.,ref=\theenumi.\alph*]
                \item if $t \mathrel{R} u$ and $t\leadsto_v t'$
                        then $\exists\, u'.~ u\leadsto_w u' \land t' \mathrel{R'} u'$,\label{epb2a} and
                \item if $t \mathrel{R} u$ and $u\leadsto_w u'$
                        then $\exists\, t'.~ t\leadsto_v t'\land t' \mathrel{R'} u'$.\label{epb2b}
              \end{enumerate}
    \end{enumerate}\pagebreak[3]
    Two states $p$ and $q$ in an LTSS are \emph{enabling preserving bisimilar (ep-bisimilar)}\index{ep-bisimilarity},
    denoted as $p \bisep q$, if there is an enabling preserving bisimulation $\R$ such that $(p,q,R)\mathbin\in\R$ for some $R$.
\end{definitionc}

\noindent
Without Items \ref{epb2a} and \ref{epb2b}, the above is nothing
else than a reformulation of the classical definition of strong bisimilarity. 
An ep-bisimulation additionally maintains for each pair of related states $p$ and $q$ a relation $R$ between the
transitions enabled in $p$ and $q$. 
Items \ref{epb2a} and \ref{epb2b} strengthen the condition on related target states
by requiring that the successors of related transitions are again related relative to these target states.
It is this requirement which distinguishes the transition systems for \ex{filippo}.~\cite{GHW21ea}

\begin{lemma}[Proposition~10 of \cite{GHW21ea}]
$\bisep$ is an equivalence relation.
\end{lemma}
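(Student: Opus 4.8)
The plan is to establish the three defining properties of an equivalence relation by exhibiting, in each case, a witnessing ep-bisimulation. This amounts to proving three closure properties of the class of ep-bisimulations: that the diagonal is an ep-bisimulation (yielding reflexivity), that the inverse of an ep-bisimulation is an ep-bisimulation (yielding symmetry), and that the relational composition of two ep-bisimulations is again an ep-bisimulation (yielding transitivity). I treat them in that order of increasing difficulty.

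For reflexivity I would take $\R = \{(p,p,\Id_{\en(p)}) \mid p \in S\}$, where $\Id_{\en(p)}$ is the identity relation on the transitions enabled in $p$. Item~\ref{epb1} is immediate, since every $t\in\en(p)$ is related to itself and $\ell(t)=\ell(t)$. For Item~\ref{epb2}, a related moving pair $v\mathrel{\Id_{\en(p)}}w$ forces $v=w$, hence $\target(v)=\target(w)$; choosing $R'=\Id_{\en(\target(v))}$ and noting that $t\leadsto_v t'$ entails $\source(t')=\target(v)$, i.e. $t'\in\en(\target(v))$, both clauses \ref{epb2a} and \ref{epb2b} hold by taking $u'=t'$. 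For symmetry, given an ep-bisimulation $\R$, I would check that $\R^{-1} = \{(q,p,R^{-1}) \mid (p,q,R)\in\R\}$, with $R^{-1}=\{(u,t)\mid t\mathrel R u\}$, is again an ep-bisimulation. This is routine bookkeeping: clauses \ref{epb1a} and \ref{epb1b} swap roles, \ref{epb1c} is symmetric in $t$ and $u$, and \ref{epb2a} and \ref{epb2b} are likewise interchanged.

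The substantial case is transitivity, and I expect it to be the main obstacle. Given ep-bisimulations $\R_1$ and $\R_2$, I would form their composition $\R$, whose triples are those $(p,r,R_1;R_2)$ for which some intermediate state $q$ satisfies $(p,q,R_1)\in\R_1$ and $(q,r,R_2)\in\R_2$, using forward relational composition $R_1;R_2 = \{(t,v)\mid \exists u.\, t\mathrel{R_1}u \wedge u\mathrel{R_2}v\}$. Item~\ref{epb1} transfers easily: totality \ref{epb1a}/\ref{epb1b} follows by composing the witnesses obtained from $\R_1$ and $\R_2$, and label matching \ref{epb1c} holds because $\ell(t)=\ell(u)=\ell(v)$ along any intermediate witness $u$. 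The delicate part is Item~\ref{epb2}. Given a related moving pair $a\mathrel{(R_1;R_2)}c$ with intermediate witness $b$ (so $a\mathrel{R_1}b$ and $b\mathrel{R_2}c$), I would apply Item~\ref{epb2} of $\R_1$ to $a\mathrel{R_1}b$ and of $\R_2$ to $b\mathrel{R_2}c$, obtaining continuation relations $R_1'$ on $(\target(a),\target(b))$ and $R_2'$ on $(\target(b),\target(c))$, and then take $R'=R_1';R_2'$. Since $\target(b)$ witnesses the composition, the target triple $(\target(a),\target(c),R')$ lies back in $\R$, as required.

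The successor clauses are where the real work lies, as they must be chained through the intermediate data. For \ref{epb2a}, starting from a related pair $t\mathrel{(R_1;R_2)}v$ with witness $u$ and a successor $t\leadsto_a t'$, clause \ref{epb2a} for $\R_1$ produces $u'$ with $u\leadsto_b u'$ and $t'\mathrel{R_1'}u'$; feeding this into clause \ref{epb2a} for $\R_2$ produces $v'$ with $v\leadsto_c v'$ and $u'\mathrel{R_2'}v'$, so that $t'\mathrel{R'}v'$ and $v\leadsto_c v'$ as needed; clause \ref{epb2b} is dual. The point requiring care is that the intermediate transition $b$ and the intermediate successor $u'$ thread consistently through both invocations, so that the composed continuation $R_1';R_2'$ is exactly the relation witnessing membership of $(\target(a),\target(c),R')$ in $\R$; this bookkeeping of intermediate witnesses is the one place where the argument is more than a mechanical transfer. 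The three closure properties together give reflexivity, symmetry and transitivity of $\bisep$, completing the proof.
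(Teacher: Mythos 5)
Your proposal is correct and follows the standard route: the diagonal relation (with identity transition-relations) for reflexivity, inverses for symmetry, and relational composition with composed continuation relations $R_1';R_2'$ for transitivity, chaining the successor clauses through the intermediate witness. The paper itself does not reproduce a proof but cites Proposition~10 of \cite{GHW21ea}, and the evidence in the appendix (the reference to $\R_{\it id}$ in the reflexivity proof of that proposition) indicates that the cited proof proceeds in essentially the same way as yours.
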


\section{An Introductory Example: CCS with Successors\label{sec:CCS}}
Before starting to introduce the concepts formally, we want to present some motivation in the
form of the well-known Calculus of Communicating Systems (CCS)~\cite{Mi90ccs}.
In this paper we use a proper recursion construct instead of agent identifiers with defining equations.
As in \cite{BW90}, we write $\rec{X|S}$ for the $X$-component of a solution of the set of
  recursive equations $S$.

CCS is parametrised with set ${\Ch}$ of {\em handshake communication names}. 
$\bar{\Ch} \coloneqq \{\bar{c} \mid c\in\Ch\}$ is the set of \emph{handshake communication co-names}.
$\Actc \coloneqq \Ch \djcup \bar{\Ch} \djcup \{\tau\}$
is the set of {\em actions}, where $\tau$ is a special \emph{internal action}.
Complementation extends to $\Ch \djcup \bar{\Ch}$ by $\bar{\bar{c}} \coloneqq c$.

Below, $c$ ranges over $\Ch \djcup \bar{\Ch}$ and
  $\alpha$, $\ell$, $\eta$ over $\Actc$.
  A \emph{relabelling} is a function $f:\Ch\to\Ch$;
  it extends to $\Actc$ by $f(\bar{c})=\overline{f(c)}$, $f(\tau)\coloneqq\tau$.

\begin{table}[tb]
  \vspace{-1.5ex}
  \caption{Structural operational semantics of CCS\label{tab:CCS transition rules}}
  \vspace{1ex}
  \centering
  \framebox{$\begin{array}{ccc}
    \displaystyle\frac{}{\alpha.x \goto{\alpha} x}                                            \mylabelA{actAlpha}{\actsyn{\alpha}} &
    \displaystyle\frac{x \goto{\alpha} x'}{x+y \goto{\alpha} x'}                              \mylabelA{plusL}{+^{}_{\!\!\Left}} &
    \displaystyle\frac{y \goto{\alpha} y'}{x+y \goto{\alpha} y'}                              \mylabelA{plusR}{+^{}_{\!\!\Right}} \\[4ex]
    \displaystyle\frac{x \goto{\eta} x'}{x|y \goto{\eta} x'|y}                                \mylabelA{parL}{|^{}_\Left} &
    \displaystyle\frac{x \goto{c} x',~ y \goto{\bar{c}} y'}{x|y \goto{\tau} x'|y'}            \mylabelA{parH}{|^{}_\mathrm{\scriptscriptstyle C}} &
    \displaystyle\frac{y \goto{\eta} y'}{x|y \goto{\eta} x|y'}                                \mylabelA{parR}{|^{}_\Right} \\[4ex]
    \displaystyle\frac{x \goto{\ell} x'~~\color{Green}(\ell\notin L \djcup \overline{L})}
      {x\backslash L \goto{\ell} x'\backslash L}                                              \mylabelA{restr}{\backslash L} &
    \displaystyle\frac{x \goto{\ell} x'}{x[f] \goto{f(\ell)} x'[f]}                           \mylabelA{relab}{[f]} &
    \displaystyle\frac{\rec{S_X|S} \goto{\alpha} y}{\rec{X|S} \goto{\alpha} y}                \mylabelA{recAct}{rec_{\Act}} \\[3ex]
  \end{array}$}
\end{table}

The process signature $\Sigma$ of CCS features binary infix-written operators $+$ and $|$, denoting \emph{choice} and \emph{parallel composition}, a constant ${\bf 0}$ denoting \emph{inaction}, 
  a unary \emph{action prefixing} operator $\alpha.\_\!\_$ for each action $\alpha\in\Actc$,
  a unary \emph{restriction} operator $\_\!\_{\setminus}L$ for each set $L\subseteq\Ch$, and
  a unary \emph{relabelling} operator $\_\!\_[f]$ for each relabelling
  $f:\Ch\to\Ch$. 

The semantics of CCS is given by the set $\RR$ of \emph{transition rules}, shown in \tab{CCS transition rules}.
Here $\overline{L} \coloneqq \{\bar{c} \mid c\in L\}$.
Each rule has a unique name, displayed in {\color{blue} blue}.\footnote{Our colourings are for readability only.}
The rules are displayed as templates, following
  the standard convention of labelling transitions with \emph{label variables} $c$, $\alpha$, $\ell$, etc.\@
  and may be accompanied by side conditions in {\color{Green} green},
  so that each of those templates corresponds to a set of (concrete) transition rules
  where label variables are ``instantiated'' to labels in certain ranges and all side conditions are met.
The rule names are also schematic and may contain variables.
For example, all instances of the transition rule template $\myrefA{plusL}$ are named $\myrefA{plusL}$, whereas there is one rule name $\myrefA{actAlpha}$ for each action $\alpha\in\Actc$.

The transition system specification $\itt$ is in De Simone format~\cite{dS85}, a special rule format that guarantees properties of the process algebra (for free), such as strong bisimulation being a congruence for all operators.
Following \cite{GHW21ea}, we leave out the infinite sum $\sum_{i \in I} x_i$ of CCS~\cite{Mi90ccs}, as it is strictly speaking not in De Simone format.

In this paper, we will extend the De Simone format to also guarantee properties for ep-bisimulation.
As seen, ep-bisimulation requires that the structural operational semantics is equipped with a successor relation $\leadsto$.
The meaning of $\chi \leadsto_{\zeta} \chi'$ is that transition $\chi$ is unaffected by $\zeta$ -- denoted $\chi \aconc \zeta$ --
and that when doing $\zeta$ instead of $\chi$, afterwards a variant $\chi'$ of $\chi$ is still enabled.
\tab{CCS successor rules} shows the \emph{successor rules} for CCS, which allow the relation $\leadsto$ to be derived inductively.
It uses the following syntax for transitions $\chi$, which will be formally introduced in \Sec{transition expressions}.
The expression $t {\myrefA{plusL}} Q$ refers to the transition that is derived by rule $\myrefA{plusL}$ of \tab{CCS transition rules},
with $t$ referring to the transition used in the unique premise of this rule, and $Q$ referring to
the process in the inactive argument of the $+$-operator. The syntax for the other transitions is
analogous. A small deviation of this scheme occurs for recursion: $\RecAct(X,S,t)$ refers to the
transition derived by rule $\myrefA{recAct}$ out of the premise $t$, when deriving a transition of a
recursive call $\rec{X|S}$. 

In \tab{CCS successor rules} each rule is named, in {\color{orange} orange},
  after the number of the clause of Definition 20 in \cite{GHW21ea}, were it was introduced.

{
\newcommand{\RS}{S}   
\renewcommand{\tx}{t} 
\newcommand{\txp}{t'} 
\newcommand{\uy}{u}   
\newcommand{\uyp}{u'} 
\renewcommand{\vx}{v} 
\newcommand{\wy}{w}   
\newcommand{\set}[1]{{\scriptstyle\{#1\}}}
\newcommand{\name}[1]{{\color{orange}#1}}%
The primary source of concurrency between transition $\chi$ and $\zeta$ is when they stem from
opposite sides of a parallel composition. This is expressed by Rules~\name{7a} and~\name{7b}.
We require all obtained successor statements
$\chi \leadsto_{\zeta} \chi'$ to satisfy the conditions of \df{LTSS} -- this yields
$Q'=\target(w)$ and $P'=\target(v)$; in \cite{GHW21ea} $Q'$ and $P'$ were written this way.
  
In all other cases, successors of $\chi$ are inherited from
successors of their building blocks.

When $\zeta$ stems from the left side of a $+$ via rule $\myrefA{plusL}$ of \tab{CCS transition rules},
then any transition $\chi$ stemming from the right is discarded by $\zeta$, so $\chi \naconc \zeta$.
Thus, if $\chi \aconc \zeta$ then these transitions have the form $\chi=\tx \myrefA{plusL} Q$ and
$\zeta=\vx \myrefA{plusL} Q$, and we must have $\tx \aconc \vx$. So $\tx \leadsto_v \txp$ for some transition $\txp$.
As the execution of $\zeta$ discards the summand $Q$, we also obtain $\chi \leadsto_{\zeta} \txp$.
This motivates Rule~\name{3a}. Rule~\name{4a} follows by symmetry.

In a similar way, Rule~\name{8a} covers the case that $\chi$ and $\zeta$ both stem from the left
component of a parallel composition.
It can also happen that $\chi$ stems form the left component, whereas $\zeta$ is a synchronisation,
involving both components. Thus $\chi=\tx\myrefA{parL} Q$ and $\zeta=\vx \myrefA{parH} \wy$. For $\chi\aconc\zeta$ to hold, it must be
that $\tx\aconc \vx$, whereas the $\wy$-part of $\zeta$ cannot interfere with $t$. This yields the Rule~\name{8b}.
Rule~\name{8c} is explained in a similar vain from the possibility that $\zeta$
stems from the left while $\chi$ is a synchronisation of both components.
Rule~\name{9} follows by symmetry.
In case both $\chi$ and $\zeta$ are synchronisations involving both components, i.e., $\chi=\tx \myrefA{parH} \uy$ and
$\zeta=\vx \myrefA{parH} \wy$, it must be that $\tx \aconc \vx$ and $\uy \aconc \wy$. Now the resulting variant $\chi'$ of
$\chi$ after $\zeta$ is simply $\txp|\uyp$, where $\tx \leadsto_{\vx} \txp$ and $\uy \leadsto_{\wy} \uyp$.
This underpins Rule~\name{10}.

If the common source $\sR$ of $\chi$ and $\zeta$ has the form $P\relab$, $\chi$ and $\zeta$ must have
the form $\tx\myrefA{relab}$ and $\vx\myrefA{relab}$.
Whether $\tx$ and $\vx$ are concurrent is not influenced by the renaming. So $\tx\aconc \vx$.
The variant of $\tx$ that remains after doing $\vx$ is also not affected by the renaming,
so if $\tx \leadsto_{\vx} \txp$ then $\chi \leadsto_{\zeta} \txp\myrefA{relab}$. The case that $\sR = P{\setminus}L$ is equally trivial. This yields Rules~\name{11a} and~\name{11b}.

In case $\sR=\rec{X|\RS}\!$, $\chi$ must have the form $\RecAct(X,S,\tx)$,
and $\zeta$ has the form $\RecAct(X,S,\vx)$,
where $\tx$ and $\vx$ are enabled in $\rec{\RS_X|\RS}$. Now $\chi\mathbin{\aconc} \zeta$ only if $\tx \mathbin{\aconc} \vx$,
so $\tx \mathbin{\leadsto_{\vx}} \txp$ for some transition~$\txp$.
The recursive call disappears upon executing
$\zeta$, and we obtain $\chi \leadsto_{\zeta} \txp$. This yields Rule~\name{11c}.%
}

\begin{table}[t]
  \vspace{-1.5ex}
  \caption{Successor rules for CCS\label{tab:CCS successor rules}}
  \vspace{1ex}
  \centering
  \framebox{$\begin{array}{c@{\ \ \ \ }c@{\ \ \ \ }c}

    \multicolumn{3}{c}{
    \displaystyle\frac{t \leadsto_v t'}{
    t {\myrefA{plusL}} Q \leadsto_{v {\myrefA{plusL}} Q} t'
    }\mylabelS{3a} \qquad\qquad
    \displaystyle\frac{u \leadsto_w u'}{
        P {\myrefA{plusR}} u \leadsto_{P {\myrefA{plusR}} w} u'
    }\mylabelS{4a} }
 \\[4ex]

    \multicolumn{3}{c}{
      \displaystyle\frac{}{t {\myrefA{parL}} Q \leadsto_{P {\myrefA{parR}} w} t {\myrefA{parL}} Q'}\mylabelS{7a} \qquad
      \displaystyle\frac{t \leadsto_v t' \quad u \leadsto_w u'}{t {\myrefA{parH}} u \leadsto_{v {\myrefA{parH}} w} t' {\myrefA{parH}} u'}\mylabelS{10} \qquad
      \displaystyle\frac{}{P {\myrefA{parR}} u \leadsto_{v {\myrefA{parL}} Q} P' {\myrefA{parR}} u}\mylabelS{7b}} \\[4ex]

    \multicolumn{3}{c}{
      \displaystyle\frac{t \leadsto_v t'}{t {\myrefA{parL}} Q \leadsto_{v {\myrefA{parL}} Q} t' {\myrefA{parL}} Q}\mylabelS{8a} \qquad
      \displaystyle\frac{t \leadsto_v t'}{t {\myrefA{parL}} Q \leadsto_{v {\myrefA{parH}} w} t' {\myrefA{parL}} Q'}\mylabelS{8b} \qquad
      \displaystyle\frac{t \leadsto_v t'}{t {\myrefA{parH}} u \leadsto_{v {\myrefA{parL}} Q} t' {\myrefA{parH}} u}\mylabelS{8c}} \\[4ex]

    \multicolumn{3}{c}{
      \displaystyle\frac{u \leadsto_w u'}{P {\myrefA{parR}} u \leadsto_{P {\myrefA{parR}} w} P {\myrefA{parR}} u'}\mylabelS{9a} \qquad
      \displaystyle\frac{u \leadsto_w u'}{P {\myrefA{parR}} u \leadsto_{v {\myrefA{parH}} w} P' {\myrefA{parR}} u'}\mylabelS{9b} \qquad
      \displaystyle\frac{u \leadsto_w u'}{t {\myrefA{parH}} u \leadsto_{P {\myrefA{parR}} w} t {\myrefA{parH}} u'}\mylabelS{9c} } \\[4ex]
    
    \multicolumn{3}{c}{
      \displaystyle\frac{t \leadsto_v t'}{t{\myrefA{restr}} \leadsto_{v{\myrefA{restr}}} t'{\myrefA{restr}}}\mylabelS{11a} \qquad
      \displaystyle\frac{t \leadsto_v t'}{t{\myrefA{relab}} \leadsto_{v{\myrefA{relab}}} t'{\myrefA{relab}}}\mylabelS{11b} \qquad
      \displaystyle\frac{t \leadsto_v t'}{
          \RecAct(X,S,t) \leadsto_{\RecAct(X,S,v)} t'
        }\mylabelS{11c}}
  \end{array}$}
  \vspace{-1ex}
\end{table}

\begin{example}
The programs from~\ex{filippo} could be represented in CCS as
$P \mathbin{:=} \rec{X|S}$ where\linebreak $S={\scriptstyle\left\{\begin{array}{l} X = a.X  + b.Y \\ Y = a.Y\end{array}\!\!\right\}}$ and 
$Q := \rec{Z|\{Z = a.Z\}}| b.{\mathbf{0}}$. Here $a,b\in \Actc$ are the atomic actions incrementing
$y$ and $x$.
The relation matching $P$ with $Q$ and $\rec{Y,S}$ with $\rec{Z|\{Z = a.Z\}}| {\mathbf{0}}$
is a strong bisimulation. Yet, $P$ and $Q$ are not ep-bisimilar,
as the rules of \tab{CCS successor rules} derive $u \leadsto_{t_1} u$ (cf.~\ex{fillipo2})
where \plat{$u = \rec{Z|\{Z = a.Z\}} \myrefA{parR} {\color{blue}\actsyn{b}}\mathbf{0}$} and
\plat{$t_1 = \recAct(Z,\{Z{=}a.Z\},{\color{blue}\actsyn{a}}Q)\myrefA{parL}b.\mathbf{0}$}.
This cannot be matched by $P$, thus violating condition 2.b.\ of \df{ep-bisimilarity}.
\end{example}

\noindent
In this paper we will introduce a new De Simone format for transition systems with successors (TSSS).
We will show that $\bisep$ is a congruence for all operators (as
well as a lean congruence for recursion) in any language that fits this format.
Since the rules of \tab{CCS successor rules} fit this new De Simone format,
it follows that $\bisep$ is a congruence for the operators of CCS.

Informally, the conclusion of a successor rule in this extension of the De Simone format must have
the form $\zeta \leadsto_\xi \zeta'$ where $\zeta$, $\xi$ and $\zeta'$ are \emph{open transitions},
denoted by \emph{transition expressions} with variables, formally introduced in \Sec{transition expressions}.
Both $\zeta$ and $\xi$ must have a leading operator $\snr$ and $\sns$ of the same type, and the same number of arguments.
These leading operators must be rule names of the same type. Their arguments are either process variables $P,Q,...$
or transition variables $t,u,...$, as determined by the trigger sets $I_{\snr}$ and $I_\sns$ of $\snr$ and~$\sns$.
These are the sets of indices listing the arguments for which rules $\snr$ and~$\sns$ have a premise. 
If the $i^{\rm th}$ arguments of $\snr$ and $\sns$ are both process variables, they must be the
same, but for the rest all these variables are different.
For a subset $I$ of $I_\snr \cap I_\sns$, the rule has premises $t_i \leadsto_{u_i} t'_i$ for $i \in I$,
where $t_i$ and $u_i$ are the $i^{\rm th}$ arguments of $\snr$ and $\sns$,
and $t'_i$ is a fresh variable. Finally, the right-hand side of the conclusion may be an arbitrary
univariate transition expression, containing no other variables than:
\begin{itemize}
\item the $t'_i$ for $i\in I$,
\item a $t_i$ occurring in $\zeta$, with $i\notin I_\sns$,
\item a fresh process variable $P'_i$ that must match the target of the transition $u_i$ for $i\in I_\sns{\setminus}I$,
\item \emph{or} a fresh transition variable whose source matches the target of $u_i$ for $i\in I_\sns{\setminus}I$, and
\item any $P$ occurring in both $\zeta$ and $\xi$, \emph{or} any fresh transition variable whose source must be $P$.
\end{itemize}
The rules of \tab{CCS successor rules} only feature the first three possibilities; the others occur
in the successor relation of {\ABCdE} -- see \Sec{abcde}.

\section{Structural Operational Semantics}\label{sec:structural operational semantics}
Both the De Simone format and our forthcoming extension are based on the syntactic form of
  the operational rules. 
In this section,  we recapitulate foundational definitions needed later on.
Let $\Var$ be an infinite set of \emph{process variables}, ranged over by
$X,Y,x,y,x_i$, etc.

\newcommand{\E}{p}

\begin{definitionc}{Process Expressions \cite{vG94a}}\label{df:process}\upshape
  An \emph{operator declaration}\index{operator declaration} is a pair $(\Op,n)$
    of an \emph{operator symbol}\index{operator symbol} $\Op\notin\Var$ and an \emph{arity}\index{arity} $n\in\IN$.
  An operator declaration $(c,0)$ is also called a \emph{constant declaration}\index{constant declaration}.
  A \emph{process signature}\index{process signature} is a set of operator declarations.
  The set $\IP^r(\Sigma)$ of \emph{process expressions}\index{process expression} over a process signature $\Sigma$ is defined inductively by:
  \begin{itemize}
    \item $\Var\subseteq\IP^r(\Sigma)$,
    \item if $(\Op,n)\in\Sigma$ and $p_1,\dots,p_n \in \IP^r(\Sigma)$ then $\Op(p_1,\dots,p_n)\in\IP^r(\Sigma)$, and
    \item if $V_S\subseteq\Var$, $S:V_S\to\IP^r(\Sigma)$ and $X\in V_S$, then $\rec{X|S}\in\IP^r(\Sigma)$.
  \end{itemize}
  A process expression $c()$ is abbreviated as $c$ and is also called a \emph{constant}\index{constant}.
An expression $\rec{X|S}$ as appears in the last clause is called a \emph{recursive call}, and
 the function $S$ therein is called a \emph{recursive specification}\index{recursive specification}.
  It is often displayed as $\{X=S_X \mid X\in V_S\}$.
  Therefore, for a recursive specification $S$, $V_S$ denotes the domain of $S$ and $S_X$ represents $S(X)$ when $X\in V_S$.
  Each expression $S_Y$ for $Y\in V_S$ counts as a subexpression of $\rec{X|S}$.
  An occurrence of a process variable $y$ in an expression $\E$ is \emph{free}\index{free}
    if it does not occur in a subexpression of the form $\rec{X|S}$ with $y\in V_S$.
  For an expression $\E$, $\var(\E)$ denotes the set of process variables having at least one free occurrence in $\E$.
  An expression is \emph{closed}\index{closed} if it contains no free occurrences of variables.
  Let $\cP^r(\Sigma)$ be the set of closed process expressions over $\Sigma$.
\end{definitionc}

\begin{definitionc}{Substitution}\label{df:substitution}\upshape
  A \emph{$\Sigma$-substitution}\index{$\Sigma$-substitution} $\sigma$ is a partial function from $\Var$ to $\IP^r(\Sigma)$.
  It is \emph{closed}\index{closed} if it is a total function from $\Var$ to $\cP^r(\Sigma)$.

  If $p\in\IP^r(\Sigma)$ and $\sigma$ a $\Sigma$-substitution, then $p[\sigma]$ denotes the expression obtained from $p$
    by replacing, for $x$ in the domain of $\sigma$, every free occurrence of $x$ in $p$ by $\sigma(x)$,
    while renaming bound process variables if necessary to prevent name-clashes.
  In that case $p[\sigma]$ is called a \emph{substitution instance}\index{substitution instance} of $p$.
  A substitution instance $p[\sigma]$ where $\sigma$ is given by $\sigma(x_i)=q_i$ for $i\in I$ is denoted as $p[q_i/x_i]_{i\in I}$,
    and for $S$ a recursive specification $\rec{p|S}$ abbreviates $p[\rec{Y|S}/Y]_{Y\in V_S}$.

  These notions, including ``free''\index{free} and ``closed''\index{closed}, extend to syntactic objects containing expressions,
    with the understanding that such an object is a substitution instance of another one
    if the same substitution has been applied to each of its constituent expressions.
\end{definitionc}

\noindent
We assume fixed but arbitrary sets $\Lab$ and $\NN$ of \emph{transition labels} and \emph{rule names}.
\begin{definitionc}{Transition System Specification \cite{GrV92}}\label{df:TSS}\upshape
  Let $\Sigma$ be a process signature.
  A \emph{\lit-(transition) literal}\index{\lit-(transition) literal}
    is an expression \plat{$p \goto{a} q$} with $p,q\in\IP^r(\Sigma)$ and $a\mathbin\in\Lab$.
  A \emph{transition rule}\index{transition rule} over \lit is an expression of the form $\frac{H}{\lambda}$ with
    $H$ a finite list of \lit-literals (the \emph{premises}\index{premise} of the transition rule) and
    $\lambda$ a \lit-literal (the \emph{conclusion}\index{conclusion}).
  A \emph{transition system specification (TSS)}\index{TSS} is a tuple \tss 
  with $\RR$ a set of transition rules over \lit, and $\scn:\RR\to\NN$ a (not necessarily injective) \emph{rule-naming function},
  that provides each rule $\scr\in\RR$ with a name $\scn(\scr)$.
\end{definitionc}

\begin{definitionc}{Proof}\label{df:proof}\upshape
  Assume literals, rules, substitution instances and rule-naming. A
    \emph{proof}\index{proof} of a literal $\lambda$ from a set $\RR$ of rules is a well-founded,
    upwardly branching, ordered tree where nodes are labelled by pairs $(\mu,\snr)$ of a
    literal $\mu$ and a rule name $\snr$, such that
    \begin{itemize}
      \item the root is labelled by a pair $(\lambda,\sns)$, and
      \item if $(\mu,\snr)$ is the label of a node and
              $(\mu_1,\snr_1),\dots,(\mu_n,\snr_n)$ is the list of labels of this node's children
              then $\frac{\mu_1,\dots,\mu_n}{\mu}$ is a substitution instance of a rule in $\RR$ with name $\snr$.
    \end{itemize}
\end{definitionc}

\begin{definitionc}{Associated LTS \cite{GH19}}\label{df:associated LTS}\upshape
  The \emph{associated LTS}\index{associated LTS} of a TSS \tss is the LTS $(S,\Tr,\source,\linebreak[1]\target,\ell)$ with
    $S \coloneqq \cP^r(\Sigma)$ and $\Tr$ the collection of proofs $\pi$ of closed \lit-literals \plat{$p \goto{a} q$} from $\RR$,
    where $\source(\pi)=p$, $\ell(\pi)=a$ and $\target(\pi)=q$.
\end{definitionc}

\noindent
Above we deviate from the standard treatment of structural operational semantics \cite{GrV92,vG94a}
on four counts. Here we employ {CCS} to motivate those design decisions.

In \df{associated LTS}, the transitions $\Tr$ are taken to be proofs of closed literals $p\goto{a} q$
rather than such literals themselves. This is because there can be multiple $a$-transitions
from $p$ to $q$ that need to be distinguished when taking the concurrency relation between
transitions into account. For example, if $p:= \rec{X|\{X=a.X+c.X\}}$ and $q := \rec{Y|\{Y=a.Y\}}$
then $p|q$ has three outgoing transitions:

\vspace{-2ex}
\mprset{rightstyle={\scriptsize}} 
{\footnotesize
\begin{mathpar}
 \inferrule*[Right=$\myrefA{parL}$]
 	{\inferrule*[Right=$\myrefA{recAct}$]
		{\inferrule*[Right=$\myrefA{plusL}$]
			{\inferrule*[Right=$\color{blue}{\actsyn a}$]
				{ }
				{a.p \goto{a} p}
			}
			{a.p+c.p \goto{a} p}
		}
	  	{p \goto{a} p}
	}
	{p|q \goto{a} p|q}

 \inferrule*[Right=$\myrefA{parL}$]
 	{\inferrule*[Right=$\myrefA{recAct}$]
		{\inferrule*[Right=$\myrefA{plusR}$]
			{\inferrule*[Right=$\color{blue}{\actsyn c}$]
				{ }
				{c.p \goto{c} p}
			}
			{a.p+c.p \goto{c} p}
		}
	  	{p \goto{c} p}
	}
	{p|q \goto{c} p|q}

 \inferrule*[Right=$\myrefA{parR}$]
 	{\inferrule*[Right=$\myrefA{recAct}$]
		{\inferrule*[Right=$\color{blue}{\actsyn a}$]
			{ }
			{a.q \goto{a} q}
		}
	  	{q \goto{a} q}
	}
	{p|q \goto{a} p|q}
	\vspace{-1.5ex}
\end{mathpar}}
The rightmost transition is concurrent with the middle one, whereas the leftmost one is not.

A similar example can be used to motivate why in \df{proof} the nodes are labelled not only by the
inferred literal, but also by the name of the applied rule.

\vspace{-4mm}
{\footnotesize
\begin{mathpar}
 \inferrule*[Right=$\myrefA{parL}$]
 	{\inferrule*[Right=$\myrefA{recAct}$]
		{\inferrule*[Right=$\myrefA{plusL}$]
			{\inferrule*[Right=$\color{blue}{\actsyn a}$]
				{ }
				{a.p \goto{a} p}
			}
			{a.p+c.p \goto{a} p}
		}
	  	{p \goto{a} p}
	}
	{p|p \goto{a} p|p}

 \inferrule*[Right=$\myrefA{parL}$]
 	{\inferrule*[Right=$\myrefA{recAct}$]
		{\inferrule*[Right=$\myrefA{plusR}$]
			{\inferrule*[Right=$\color{blue}{\actsyn c}$]
				{ }
				{c.p \goto{c} p}
			}
			{a.p+c.p \goto{c} p}
		}
	  	{p \goto{c} p}
	}
	{p|p \goto{c} p|p}

 \inferrule*[Right=$\myrefA{parR}$]
 	{\inferrule*[Right=$\myrefA{recAct}$]
		{\inferrule*[Right=$\myrefA{plusL}$]
			{\inferrule*[Right=$\color{blue}{\actsyn a}$]
				{ }
				{a.p \goto{a} p}
			}
			{a.p+c.p \goto{a} p}
		}
	  	{p \goto{a} p}
	}
	{p|p \goto{a} p|p}
	\vspace{-2ex}
\end{mathpar}}%
The rightmost transition is concurrent with the middle one, but the leftmost one is not.
If we were to erase the rule names, the difference between these two transitions would disappear.

In \df{TSS} we require the premises of rules to be lists rather than sets, and accordingly in
\df{proof} we require proof trees to be ordered. This is to distinguish transitions/proofs
in which a substitution instance of a rule has two identical premises (corresponding to different
arguments of the leading operator) with different proofs.
This phenomenon does not occur in CCS, but we could have illustrated it with CSP~\cite{BHR84} or ABCdE~\cite{GHW21ea}.

{\newcommand{\Nil}{\mathbf{0}}
Finally, suppose that in \df{TSS} we had chosen the rule-naming function $\scn$ to be the identity.
This is equivalent to not having a rule-naming function at all, instead labelling nodes in proofs
with rules rather than names of rules. Then in the transition

\vspace{-3.3ex}
{\footnotesize
\begin{mathpar}
 \inferrule*[Right=$\myrefA{recAct}$]
 	{\inferrule*[Right=$\color{blue}{\actsyn a}$]
		{ }
	  	{a.\Nil \goto{a} \Nil}
	}
	{\rec{X|\{X=a.\Nil} \goto{a} \Nil}
	\vspace{-2.5mm}
\end{mathpar}}%
we should replace the generic name $\myrefA{recAct}$ of a recursion rule with the specific rule employed.
This could be the rule 
{\scriptsize$\inferrule{a.\Nil \goto{a} z}{\rec{X|\{X=a.\Nil\}} \goto{a} z}$}, 
but just as well the rule 
{\scriptsize$\inferrule{y \goto{a} z}{\rec{X|\{X=y\}} \goto{a} z}$},
when employing a substitution that sends $y$ to $a.\Nil$.
To avoid the resulting unnecessary duplication of transitions,
we give both recursion rules the same name.}

\section{De Simone Languages}\label{sec:De Simone languages}

The syntax of a \emph{De Simone language}\index{De Simone language} is specified by a process signature, and its semantics is given as a TSS over that process signature of a particular form \cite{dS85}, nowadays known as the \emph{De Simone format}.
    Here, we extend the De Simone format to support \emph{indicator transitions}, as occur in
    \cite{GH15a,vG19,GHW21ea}. \pagebreak[3] These are transitions \plat{$p \goto{\ell} q$} for which it is
    essential that $p=q$. They are used to convey a property of the state $p$ rather than model an
    action of $p$. To accommodate them we need a variant of the recursion rule whose
    conclusion again is of the form \plat{$r \goto{\ell} r$}.
   This variant will be illustrated in \Sec{abcde}.

As for $\Lab$, we fix a set $\Act\subseteq\Lab$ of \emph{actions}.

\begin{definitionc}{De Simone Format}\label{df:TSS in De Simone format}\upshape
  A TSS \tss is in \emph{De Simone format}\index{De Simone format} if
    for every recursive call $\rec{X|S}$ and every $\alpha\in \Act$ and
      $\ell \in \Lab{\setminus}\Act$, it has transition rules
    \[
      \frac{\rec{S_X|S} \goto{\alpha} y}{\rec{X|S} \goto{\alpha} y}~\myrefA{recAct}
 \qquad\text{and}\qquad
      \frac{\rec{S_X|S} \goto{\ell} y}{\rec{X|S} \goto{\ell} \rec{X|S}}~\mylabelA{recIn}{rec_{\In}}
 \quad\text{for some}\quad y\notin\var(\rec{S_X|S}),
    \]
    and each of its other transition rules (\emph{De Simone rules}) has the form
    \[
      \frac{\{x_i \goto{a_i} y_i \mid i\in I\}}{\Op(x_1,\dots,x_n) \goto{a} q}
    \]
    where $(\Op,n)\in\Sigma$, $I\subseteq\{1,\dots,n\}$, $a,a_i\in\Lab$, $x_i$ (for $1\leq i\leq n$) and $y_i$ (for $i\in I$)
    are pairwise distinct process variables, and $q$ is a univariate process expression
    containing no other free process variables than $x_i$ ($1\leq i\leq n \land i\notin I$) and $y_i$ ($i\in I$),
    having the properties that 
      \begin{itemize}
        \item each subexpression of the form $\rec{X|S}$ is closed, and
        \item if $a\in\Lab{\setminus}\Act$ then $a_i\in\Lab{\setminus}\Act$ ($i\in I$) and $q=\Op(z_1,\dots,z_n)$,
          where \raisebox{0pt}[0pt]{\(z_i:=
            \left\{\begin{array}{@{}cl}y_i & \textrm{~if~} i\in I\\
                                       x_i & \textrm{~otherwise}.
                   \end{array}\right.\)}
      \end{itemize}
  Here \emph{univariate} means that each variable has at most one free occurrence in it.
  The last clause above guarantees that for any indicator transition $t$, one with $\ell(t)\in\Lab{\setminus}\Act$,
  we have $\target(t) = \source(t)$. 
  For a De Simone rule of the above form, $n$ is the \emph{arity}\index{arity}, $(\Op,n)$ is the \emph{type}\index{type}, $a$ is the \emph{label}\index{label},
    $q$ is the \emph{target}\index{target}, $I$ is the \emph{trigger set}\index{trigger set} and
    the tuple $(\ell_i,\dots,\ell_n)$ with $\ell_i=a_i$ if $i\in I$ and $\ell_i=*$ otherwise, is the \emph{trigger}\index{trigger}.
  Transition rules in the first two clauses are called \emph{recursion rules}.

We also require that if $\scn(\scr)\mathbin=\scn(\scr')$ for two different De Simone rules
$\scr,\scr'\mathbin\in \RR$, then $\scr,\scr'$ have the same type, target and trigger set, but different triggers.
The names of the recursion rules are as indicated in {\color{blue}blue} above, and differ from
the names of any De Simone rules.
\end{definitionc}
Many process description languages encountered in the literature, including
CCS~\cite{Mi90ccs} as presented in \Sec{CCS}, SCCS~\cite{M83}, ACP~\cite{BW90} and \textsc{Meije}~\cite{AB84}, are De Simone languages.

\section{Transition System Specifications with Successors}\label{sec:transition expressions}

In \Sec{structural operational semantics}, a \emph{process} is denoted by a closed process
  expression; an open process expression may contain variables, which stand for as-of-yet unspecified
  subprocesses. Here we will do the same for transition expressions with variables. However,
  in this paper a transition is defined as a proof of a literal $p \goto{a} q$ from the operational
  rules of a language. Elsewhere, a transition is often defined as a provable literal $p \goto{a} q$,
  but here we need to distinguish transitions based on these proofs, as this influences whether two
  transitions are concurrent.

  It turns out to be convenient to introduce an \emph{open proof} of a literal as the semantic
  interpretation of an open transition expression. It is simply a proof in which certain subproofs are replaced by
  proof variables.

\begin{definitionc}{Open Proof}\label{df:open proof}\upshape
  Given definitions of literals, rules and substitution instances, and a rule-naming function $\scn$,
    an \emph{open proof}\index{open proof} of a literal $\lambda$ from a set $\RR$ of rules using a set $\VV$ of \emph{(proof) variables} is a well-founded,
    upwardly branching, ordered tree of which the nodes are labelled either by pairs $(\mu,\snr)$ of a
    literal $\mu$ and a rule name $\snr$, or by pairs $(\mu,\px)$ of a literal $\mu$ and a variable $\px\in\VV$ such that
    \begin{itemize}
      \item the root is labelled by a pair $(\lambda,\chi)$,
      \item if $(\mu,\px)$ is the label of a node then this node has no children,
      \item if two nodes are labelled by $(\mu,\px)$ and $(\mu',\px)$ separately then $\mu=\mu'$, and
      \item if $(\mu,\snr)$ is the label of a node and
              $(\mu_1,\chi_1),\dots,(\mu_n,\chi_n)$ is the list of labels of this node's children
              then $\frac{\mu_1,\dots,\mu_n}{\mu}$ is a substitution instance of a rule named $\snr$.\pagebreak[3]
    \end{itemize}
\end{definitionc}
Let $\TVar$ be an infinite set of \emph{transition variables}, disjoint from $\Var$.
We will use $\tx,\ux,\vx,\ty,\tx_i$, etc.\@ to range over $\TVar$.

\begin{definitionc}{Open Transition}\label{df:open transition}\upshape
  Fix a TSS \tss.
  An \emph{open transition}\index{open transition} is an open proof of a \lit-literal from $\RR$ using $\TVar$.
  For an open transition $\mathring{t}$, $\Tvar(\mathring{t})$ denotes the set of transition variables occurring in $\mathring{t}$;
    if its root is labelled by \plat{$(p \goto{a} q,\chi)$} then $\Osrc(\mathring{t})=p$, $\Oell(\mathring{t})=a$ and $\Otar(\mathring{t})=q$.
    The \emph{binding function}\index{binding function} $\beta_{\mathring{t}}$ of $\mathring{t}$
      from $\Tvar(\mathring{t})$ to \lit-literals
     is defined by $\beta_{\mathring{t}}(\tx)=\mu$
      if $\tx\in\Tvar(\mathring{t})$ and $(\mu,\tx)$ is the label of a node in $\mathring{t}$.
  Given an open transition, we refer to the subproofs obtained by deleting the root node as its \emph{direct subtransitions}\index{direct subtransition}.

  All occurrences of transition variables are considered \emph{free}\index{free}.
  Let $\IT^r\tss$ be the set of open transitions in the TSS \tss
    and $\cT^r\tss$ the set of closed open transitions.
  We have $\cT^r\tss=\Tr$.

  Let $\Oen(p)$ denote $\{\mathring{t} \mid \Osrc(\mathring{t})=p\}$.
\end{definitionc}

\begin{definitionc}{Transition Expression}\label{df:transition expression}\upshape
  A \emph{transition declaration} is a tuple $(\snr,n,I)$ of a \emph{transition constructor} $\snr$,
    an arity $n\in\IN$ and a trigger set $I\subseteq\{1,\dots,n\}$.
  A \emph{transition signature} is a set of transition declarations.
  The set $\ITE^r\ite$ of \emph{transition expressions}
    over a process signature $\Sigma_\PP$ and a transition signature $\Sigma_\TT$
    is defined inductively as follows.
    \begin{itemize}
      \item if $\tx\in\TVar$ and $\mu$ is a \lit-literal then $(\tx \dblcolon \mu)\in\ITE^r\ite$,
      \item if $E\in\ITE^r\ite$, $S:\Var\rightharpoonup\IP^r(\Sigma_\PP)$
        and $X\in\textrm{dom}(S)$\\ 
              then $\RecAct(X,S,E),\RecIn(X,S,E)\in\ITE^r\ite$, and
      \item if $(\snr,n,I)\in\Sigma_\TT$, $E_i\in\ITE^r\ite$ for each $i\in I$,
              and $E_i\in\IP^r(\Sigma_\PP)$ for each $i\in\{1,\dots,n\}\setminus I$, then $\snr(E_1,\dots,E_n)\in\ITE^r\ite$.
    \end{itemize}
  Given a TSS $(\Sigma,\RR,\scn)$ in De Simone format,
  each open transition $\mathring{t}\in\IT^r\itt$ is named by a unique transition expression in $\ITE^r\its$; here
    $\Sigma_\TT = \{(\scn(\scr),n,I) \mid \scr\in\RR$ is a De Simone rule,
    $n$ is its arity and $I$ is its trigger set$\}$:
    \begin{itemize}
      \item if the root of $\mathring{t}$ is labelled by $(\mu,\tx)$ where $\tx\in\TVar$
              then $\mathring{t}$ is named $(\tx \dblcolon \mu)$,
      \item if the root of $\mathring{t}$ is labelled by \plat{$(\rec{X|S} \goto{a} q,\snr)$} where $a\in\Act$
                then $\mathring{t}$ is named $\RecAct(X,S,E)$
                where $E$ is the name of the direct subtransition of $\mathring{t}$,
      \item if the root of $\mathring{t}$ is labelled by \plat{$(\rec{X|S} \goto{\ell} \rec{X|S},\snr)$} where $\ell\in\Lab{\setminus}\Act$
                then $\mathring{t}$ is named $\RecIn(X,S,E)$
                where $E$ is the name of the direct subtransition of $\mathring{t}$, and
      \item if the root of $\mathring{t}$ is labelled by \plat{$(\Op(p_1,\dots,p_n) \goto{a}
        q,\snr)$} then $\mathring{t}$ is named $\snr(E_1,\dots,E_n)$
              where, letting $n$ and $I$ be the arity and the trigger set of the rules named $\snr$,
              $E_i$ for each $i\in I$ is the name of the direct subtransitions of $\mathring{t}$ corresponding to the index $i$,
              and $E_i=p_i$ for each $i\in\{1,\dots,n\}\setminus I$.
    \end{itemize}
\end{definitionc}
We now see that the first requirement for the rule-naming function in \df{TSS in De Simone format}
ensures that every open transition is uniquely identified by its name.

\begin{definitionc}{Transition Substitution}\label{df:transition substitution}\upshape
 Let  \tss be a TSS\@.
  A \emph{$\itt$-substitution}\index{$\itt$-substitution} is a partial function
    $\Tsigma:(\Var\rightharpoonup\IP^r(\Sigma))\cup(\TVar\rightharpoonup\IT^r\itt)$.
  It is \emph{closed}\index{closed} if it is a total function
  $\Tsigma:(\Var\to\cP^r(\Sigma))\cup(\TVar\to\cT^r\itt)$.
  A $\itt$-substitution $\Tsigma$ \emph{matches}\index{match} all process expressions.
  It matches an open transition $\mathring{t}$ whose binding function is $\beta_{\mathring{t}}$
    if for all $(\tx,\mu)\mathbin\in\beta_{\mathring{t}}$,
    $\Tsigma(\tx)$ being defined and \plat{$\mu=(p \goto{a} q)$}
    implies $\Oell(\Tsigma(\tx))=a$ and $\Osrc(\Tsigma(\tx)),\Otar(\Tsigma(\tx))$ being the substitution instances of $p,q$ respectively
    by applying $\Tsigma\mathord\upharpoonright \Var$.
  
  If $E\in\IP^r(\Sigma)\cup\IT^r\itt$ and $\Tsigma$ is a $\itt$-substitution matching $E$,
    then $E[\Tsigma]$ denotes the expression obtained from $E$
    by replacing, for $\tx\in\TVar$ in the domain of $\Tsigma$, every subexpression of the form $(\tx \dblcolon \mu)$ in $E$ by $\Tsigma(\tx)$, and
    for $x\in\Var$ in the domain of $\Tsigma$, every free occurrence of $x$ in $E$ by $\Tsigma(x)$,
    while renaming bound process variables if necessary to prevent name-clashes.
  In that case $E[\Tsigma]$ is called a \emph{substitution instance}\index{substitution instance} of $E$.
\end{definitionc}
Note that a substitution instance of an open transition can be a transition expression not representing an open transition.
For example, applying a $\itt$-substitution $\Tsigma$ given by $\Tsigma(\ty) \coloneqq \plat{$(\tx \dblcolon y \goto{\bar{c}} y')$}$
  to the open transition $(\tx \dblcolon x \goto{c} x') \mid (\ty \dblcolon y \goto{c} y')$
  results in $(\tx \dblcolon x \goto{c} x') \mid (\tx \dblcolon y \goto{c} y')$ which is not an open transition
  because the transition variable $\tx$ is used for two different \lit-literals.
This will not happen if $\Tsigma$ is closed.

\begin{observation}\label{obs:transition composition}\upshape
 Given a TSS \tss, if
    $\mathring{t}\in\Oen(p)$ is a open transition and $\Tsigma$ is a closed $\itt$-substitution
    which matches $\mathring{t}$ then $\mathring{t}[\Tsigma]\in\Tr$,
    $\source(\mathring{t}[\Tsigma])=\Osrc(\mathring{t})[\Tsigma]$,
    $\ell(\mathring{t}[\Tsigma])=\Oell(\mathring{t})$
    and\\ $\target(\mathring{t}[\Tsigma])=\Otar(\mathring{t})[\Tsigma]$.
\end{observation}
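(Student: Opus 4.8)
The plan is to argue by structural induction on the open proof $\mathring t$, following the tree structure of \df{open proof}. Since $\Tsigma$ is closed, every free process variable of $\mathring t$ is replaced by a closed process expression and every transition variable by a closed transition, so the goal is to show that the tree $\mathring t[\Tsigma]$ resulting from \df{transition substitution} is a genuine proof of a \emph{closed} literal, i.e.\ an element of $\Tr=\cT^r\itt$, and that its source, label and target are read off correctly. Throughout I will use two routine facts: first, for a closed open transition the values of $\Osrc,\Oell,\Otar$ coincide with those of $\source,\ell,\target$ in the associated LTS (\df{associated LTS}); and second, for a literal $\mu=(p\goto a q)$ the operators $\Osrc,\Oell,\Otar$ merely read off $p,a,q$. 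Consequently, once the root literal of $\mathring t[\Tsigma]$ is identified as $\mu[\Tsigma]$, where $(\mu,\cdot)$ labels the root of $\mathring t$, all three equalities follow; note also $\Oell(\mathring t[\Tsigma])=\Oell(\mathring t)$ because labels lie in $\Lab$ and are untouched by $\Tsigma$.

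For the base case $\mathring t$ is a single node labelled $(\mu,\chi)$. If $\chi=\tx\in\TVar$, then $\mathring t$ is named $(\tx\dblcolon\mu)$, so by \df{transition substitution} $\mathring t[\Tsigma]=\Tsigma(\tx)\in\cT^r\itt=\Tr$ as $\Tsigma$ is closed. Writing $\mu=(p\goto a q)$, the matching hypothesis applied to $(\tx,\mu)\in\beta_{\mathring t}$ yields $\Oell(\Tsigma(\tx))=a$ and $\Osrc(\Tsigma(\tx))=p[\Tsigma]$, $\Otar(\Tsigma(\tx))=q[\Tsigma]$ (only $\Tsigma\mathord\upharpoonright\Var$ acting); since $\Osrc(\mathring t)=p$, $\Oell(\mathring t)=a$, $\Otar(\mathring t)=q$, these are exactly the three claimed equalities. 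If instead $\chi=\snr\in\NN$ is a rule name, the node is a zero-premise instance of a rule named $\snr$, which is the $k=0$ case of the inductive step below.

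For the inductive step the root of $\mathring t$ carries $(\mu,\snr)$ with direct subtransitions $\mathring t_1,\dots,\mathring t_k$ whose roots carry $\mu_1,\dots,\mu_k$, and $\frac{\mu_1,\dots,\mu_k}{\mu}$ is, via some $\Sigma$-substitution $\rho$, a substitution instance of a rule $\scr\in\RR$ with $\scn(\scr)=\snr$. As the binding function of each $\mathring t_j$ is contained in $\beta_{\mathring t}$, the substitution $\Tsigma$ matches each $\mathring t_j$, so by the induction hypothesis $\mathring t_j[\Tsigma]\in\Tr$ and, combining the three equalities for $\mathring t_j$, its root literal is $\mu_j[\Tsigma]$. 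The tree $\mathring t[\Tsigma]$ has root label $(\mu[\Tsigma],\snr)$ and children $\mathring t_1[\Tsigma],\dots,\mathring t_k[\Tsigma]$; to recognise it as a proof I must exhibit $\frac{\mu_1[\Tsigma],\dots,\mu_k[\Tsigma]}{\mu[\Tsigma]}$ as a substitution instance of $\scr$. This is the composition step: instantiating $\scr$ by $\rho$ and then by $\Tsigma\mathord\upharpoonright\Var$ equals instantiating it by a single composed $\Sigma$-substitution, provided the bound process variables inside any $\rec{X|S}$ subterm are renamed to avoid capture, and closedness of $\Tsigma$ ensures no transition variable survives and that substituting closed terms for free variables is harmless. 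Reading $\Osrc,\Oell,\Otar$ off $\mu[\Tsigma]$ then gives the three equalities as above. The recursion constructors $\RecAct$ and $\RecIn$ are handled by the same step, using that the recursive specification $S$ occurring in $\RecAct(X,S,E)$ and $\RecIn(X,S,E)$ is closed.

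The step I expect to be the main obstacle is this composition of substitution instances in the inductive case: verifying that applying $\Tsigma\mathord\upharpoonright\Var$ literal-by-literal to an existing instance of a De Simone or recursion rule is again an instance of the \emph{same} rule, while correctly handling the bound variables of recursive specifications and the renaming they force, so that no variable capture invalidates the proof. Everything else — the base case, the transport of source, label and target, and the reduction of the axiom case to $k=0$ — is bookkeeping that follows directly from \df{transition substitution} and the induction hypothesis.
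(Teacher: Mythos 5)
Your proposal is correct, but there is nothing in the paper to compare it against: the statement is presented as an \emph{Observation} and the authors give no proof at all, treating it as immediate from \df{open proof}, \df{open transition} and \df{transition substitution}. Your structural induction on the proof tree is precisely the routine argument they leave implicit, and you correctly isolate the only point that needs any care — that applying $\Tsigma{\upharpoonright}\Var$ to an instance of a rule yields another instance of the \emph{same} rule (composition of substitutions, with capture-avoiding renaming around recursive specifications), while closedness of $\Tsigma$ and the matching condition guarantee that replaced leaves carry exactly the substituted literals, so the resulting tree is a proof of a closed literal and hence lies in $\Tr$.
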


\begin{definitionc}{Transition System Specification with Successors}\label{df:TSSS}\upshape
  Let $\tss$ be a TSS\@.
  A \emph{$\itt$-(successor) literal}\index{$\itt$-(successor) literal}
    is an expression \plat{$\mathring{t} \leadsto_{\mathring{u}} \mathring{v}$} with $\mathring{t},\mathring{u},\mathring{v}\in\IT^r\itt$,
    $\Osrc(\mathring{t})=\Osrc(\mathring{u})$ and $\Osrc(\mathring{v})=\Otar(\mathring{u})$.
  A \emph{successor rule}\index{successor rule} over $\itt$ is an expression of the form $\frac{H}{\lambda}$ with
    $H$ a finite list of $\itt$-literals (the \emph{premises}\index{premise} of the successor rule) and
    $\lambda$ a $\itt$-literal (the \emph{conclusion}\index{conclusion}).
  A \emph{transition system specification with successors (TSSS)} is a tuple $\tsss$
    with \tss a TSS and $\UU$ a set of successor rules over $\itt$.
\end{definitionc}

\begin{definitionc}{Associated LTSS}\label{df:associated LTSS}\upshape
For a TSSS \tsss, the \emph{associated LTSS}\index{associated LTSS} 
  is the LTSS $(S,\Tr,\linebreak[1]\source,\target,\ell,\leadsto)$ with
    $S \coloneqq \cP^r(\Sigma)$, $\Tr$ the collection of proofs $\pi$ of closed \lit-literals \plat{$p \goto{a} q$} from $\RR$,
    where $\source(\pi)=p$, $\ell(\pi)=a$ and $\target(\pi)=q$,
    and\\\centerline{${\leadsto} \coloneqq \{(t,u,v) \mid \text{a proof of closed $\itt$-literal } t\leadsto_uv \text{ from } \UU \text{ exists}\}$.}
\end{definitionc}

\section{De Simone Languages with Successors}\label{sec:De Simone languages with successors}
We have enriched standard definitions such as transitions systems and specifications with successors.
This allows up to add successors to the De Simone format to define a new congruence format.

\begin{definitionc}{De Simone Format}\label{df:TSSS in De Simone format}\upshape
  A TSSS $\tsss$ is in \emph{De Simone format}\index{De Simone format} if
    \tss is in De Simone format,
    for every recursive call $\rec{X|S}$ and $\mathit{xa},\mathit{ya},\mathit{za}\in\Lab$ it has a successor rule
    \[
      \frac{(\tx \dblcolon S_X \goto{\mathit{xa}} x') \leadsto_{(\ty \dblcolon S_X \goto{\mathit{ya}} y')} (\tz \dblcolon y' \goto{\mathit{za}} z')}{\mathit{rec}_\chi(X,S,\tx \dblcolon S_X \goto{\mathit{xa}} x') \leadsto_{\RecAct(X,S,\ty \dblcolon S_X \goto{\mathit{ya}} y')}\mathring{u}}
    \]
    where 
    $\mathring{u}= (\tz \dblcolon y' \goto{\mathit{za}} z')$ if $\mathit{ya}\in\Act$
      and $\mathring{u}=\mathit{rec}_\chi(X,S,\tx \dblcolon S_X\goto{\mathit{xa}} x')$ otherwise, 
    $\mathit{rec}_\chi=\RecAct$ if $\mathit{xa}\in\Act$ and $\mathit{rec}_\chi=\RecIn$ otherwise, $x',y',z'$ are pairwise distinct process variables not occurring in $\rec{X|S}$, and $\tx,\ty,\tz$ are pairwise distinct transition variables.
    Moreover, each of its other successor rules has the form
    \[
      \frac{\{(\tx_i \dblcolon x_i \goto{\mathit{xa}_i} x'_i) \leadsto_{(\ty_i \dblcolon x_i \goto{\mathit{ya}_i} y'_i)} (\tz_i \dblcolon y'_i \goto{\mathit{za}_i} z'_i) \mid i\in I\}}
           {\snr(\mathit{xe}_1,\dots,\mathit{xe}_n) \leadsto_{\sns(\mathit{ye}_1,\dots,\mathit{ye}_n)} \mathring{v}}
    \]
such that
    \begin{itemize}
      \item $I\subseteq\{1,\dots,n\}$,
      \item $x_i,x'_i,y'_i,z'_i$ for all relevant $i$ are pairwise distinct process variables,
      \item $\tx_i,\ty_i,\tz_i$ for all relevant $i$ are pairwise distinct transition variables,
      \item if $i\in I$ then $\mathit{xe}_i=(\tx_i \dblcolon x_i \goto{\mathit{xa}_i} x'_i)$ and $\mathit{ye}_i=(\ty_i \dblcolon x_i \goto{\mathit{ya}_i} y'_i)$,
      \item if $i\notin I$ then $\mathit{xe}_i$ is either $x_i$ or $(\tx_i \dblcolon x_i \goto{\mathit{xa}_i} x'_i)$, and
                                $\mathit{ye}_i$ is either $x_i$ or $(\ty_i \dblcolon x_i \goto{\mathit{ya}_i} y'_i)$,
      \item $\snr$ and $\sns$ are $n$-ary transition constructors such that the open
        transitions
$\snr(\mathit{xe}_1,\dots,\mathit{xe}_n)$,\\ $\sns(\mathit{ye}_1,\dots,\mathit{ye}_n)$ and
        $\mathring{v}$ satisfy\\[1mm]
              \mbox{}\hfill
              $\Osrc(\snr(\mathit{xe}_1,\dots,\mathit{xe}_n))=\Osrc(\sns(\mathit{ye}_1,\dots,\mathit{ye}_n))$\hfill\mbox{}\\[1mm] and
              $\Osrc(\mathring{v})=\Otar(\sns(\mathit{ye}_1,\dots,\mathit{ye}_n))$,
      \item $\mathring{v}$ is univariate and contains no other variable
          expressions than 
              \begin{itemize}
                \item $x_i$ or $(\tz_i \dblcolon x_i \goto{\mathit{za}_i} z'_i)$ ($1\leq i\leq n \land \mathit{xe}_i=\mathit{ye}_i=x_i$),
                \item $(\tx_i \dblcolon x_i \goto{\mathit{xa}_i} x'_i)$ ($1\leq i\leq n \land \mathit{xe}_i\neq x_i \land \mathit{ye}_i=x_i$),
                \item $y'_i$ or $(\tz_i \dblcolon y'_i \goto{\mathit{za}_i} z'_i)$ ($1\leq i\leq n \land i\notin I \land \mathit{ye}_i\neq x_i$),
                \item $(\tz_i \dblcolon y'_i \goto{\mathit{za}_i} z'_i)$ ($i \in I$), and
              \end{itemize}
      \item if $\Oell(\sns(\mathit{ye}_1,\dots,\mathit{ye}_n))\in\Lab{\setminus}\Act$ then 
              for $i\in I$, $\mathit{ya}_i\in\Lab{\setminus}\Act$;
              for $i\notin I$, either $\mathit{xe}_i=x_i$ or $\mathit{ye}_i=x_i$;
              and $\mathring{v}=\snr(\mathit{ze}_1,\dots,\mathit{ze}_n)$, where
                \[\raisebox{0pt}[0pt]{\(\mathit{ze}_i:=
                  \left\{\begin{array}{@{}cl} (\tz_i \dblcolon y'_i \goto{\mathit{za}_i} z'_i) & \textrm{~if~} i\in I\\
                                              \mathit{xe}_i & \textrm{~if~} i\notin I \textrm{~and~} \mathit{ye}_i=x_i\\
                                              y'_i & \textrm{~otherwise}.
                \end{array}\right.\)}\]
    \end{itemize}
\end{definitionc}
\begin{figure}[b!]
  \input{inclusions}
  \centerline{\box\graph}
  \caption{Inclusion between index sets $I,I_{\snr},I_{\sns},I_{\textsc t},I_G \subseteq \{1,..,n\}$.
    One has $(I_{\snr}\cap I_G){\setminus}I_{\sns} \subseteq I_{\textsc t}$.}

  \parindent 29pt
    The annotations ${\color{orange}n}_i$ show the location of index $i$ (suppressed for unary
    operators) of rule ${\color{orange}n}$.
  \label{fig:inclusions}
\end{figure}
The last clause above is simply to ensure that if $t \leadsto_{u} v$ for an indicator transition $u$, that is,
  with $\ell(u)\notin \Act$, then $v=t$.

  The other conditions of \df{TSSS in De Simone format} are illustrated by the Venn diagram of \fig{inclusions}.
  The outer circle depicts the indices $1,\dots,n$ numbering the arguments of the operator $\Op$ that is the common
  type of the De Simone rules named $\snr$ and $\sns$; $I_\snr$ and $I_\sns$ are the trigger sets of $\snr$ and $\sns$, respectively.
  In line with \df{transition expression}, $\mathit{xe} = x_i$ for $i \in I_\snr$, and
  $\mathit{xe} = (\tx_i \dblcolon x_i \goto{\mathit{xa}_i} x'_i)$ for $i \notin I_\snr$.
  Likewise, $\mathit{ye} = x_i$ for $i \in I_\sns$, and
  \plat{$\mathit{ye} = (\ty_i \dblcolon x_i \goto{\mathit{xa}_i} y'_i)$} for $i \notin I_\sns$.
  So the premises of any rule named $\sns$ are \plat{$\{x_i \goto{\mathit{xa}_i} y'_i \mid i\in I_\sns\}$}.
  By \df{TSS in De Simone format} the target of such a rule is a univariate process expression $q$ with no other variables
  than $z_1,\dots,z_n$, where $z_i:=x_i$ for $i \in I_\sns$ and $z_i := y'_i$ for $i \notin I_\sns$.
  Since $\Osrc(\mathring{v}) = q$, the transition expression $\mathring{v}$ must be univariate, and have no
  variables other than $\mathit{ze}_i$ for $i=1,\dots,n$, where $\mathit{ze}_i$ is either the process variable $z_i$ or a
  transition variable expression $(\tz_i \dblcolon z_i \goto{\mathit{xa}_i} z'_i)$.

  $I$ is the set of indices $i$ for which the above successor rule has a premise. Since this premise involves the transition
  variables $\tx_i$ and $\ty_i$, necessarily $I \subseteq I_\snr \cap I_\sns$. Let $I_G$ be the set of indices for which
  $\mathit{ze}_i$ occurs in $\mathring{v}$, and $I_{\textsc t} \subseteq I_G$ be the subset where $\mathit{ze}_i$ is a
  transition variable. The conditions on $\mathring{v}$ in \df{TSSS in De Simone format} say that
  $I\cap I_G \subseteq I_{\textsc t}$ and $(I_{\snr}\cap I_G){\setminus}I_{\sns} \subseteq I_{\textsc t}$.
  For $i \in I\cap I_G$, the transition variable $\tz_i$ is inherited from the premises of the rule,
  and for $i \in (I_{\snr}\cap I_G){\setminus}I_{\sns}$ the transition variable $\tz_i$ is inherited from its source.

  In order to show that most classes of indices allowed by our format are indeed populated, we indicated the positions
  of the indices of the rules of CCS and (the forthcoming) \ABCdE from Tables~\ref{tab:CCS successor rules} and
  \ref{tab:ABCdE successor rules}.

Any De Simone language, including  CCS, SCCS, ACP and \textsc{Meije}, can trivially be extended to a
language with successors, e.g.\ by setting $\UU=\emptyset$. This would formalise the assumption that the parallel composition operator of these languages is
governed by a \emph{scheduler}, scheduling actions from different components in a nondeterministic way.
The choice of $\UU$ from \tab{CCS successor rules} instead formalises the assumption that parallel
components act independently, up to synchronisations between them.

We now present the main theorem of this paper, namely that ep-bisimulation is a lean congruence for
all languages that can be presented in De Simone format with successors.
A lean congruence preserves equivalence when replacing closed subexpressions of a process by
equivalent alternatives. Being a lean congruence implies being a congruence for all operators of the
language, but also covers the recursion construct.

\begin{theoremc}{Lean Congruence}\label{thm:lean congruence}\upshape
Ep-bisimulation is a lean congruence for all De Simone languages with successors. Formally, fix a TSSS $\tsss$ in De Simone format.
  If $p\in\IP^r(\Sigma)$ and $\rho,\nu$ are two closed $\Sigma$-substitutions with $\forall x\in\Var.$ $\rho(x) \bisep \nu(x)$
    then $p[\rho] \bisep p[\nu]$.
\end{theoremc}
The proof can be found in \app{a congruence result}.

In contrast to a lean congruence, a full congruence would also allow replacement within a recursive specification of subexpressions that may contain recursion variables bound outside of these subexpressions.
As our proof is already sophisticated, we consider the proof of full congruence to be beyond the scope of the paper.
In fact we are only aware of two papers that provide a proof of full congruence via a rule format~\cite{Re00,vG17b}.

We carefully designed our De Simone format with successors and can state the following conjecture.
\begin{conjecture}
Ep-bisimulation is a full congruence for all De Simone languages with successors.
\end{conjecture}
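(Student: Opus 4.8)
The plan is to reduce full congruence to the already-established lean congruence (\thm{lean congruence}) together with a dedicated, coinductive treatment of recursion. First I would extend ep-bisimilarity to open process expressions by setting $p \bisep^{o} q$ iff $p[\sigma]\bisep q[\sigma]$ for every closed $\Sigma$-substitution $\sigma$. Since replacing a subexpression that contains recursion variables bound by an enclosing $\rec{X|S}$ amounts, after $\alpha$-renaming, to replacing one recursive specification $S$ by another $S'$ with $V_S=V_{S'}$, it suffices to prove the single \emph{recursion-congruence lemma}: if $S_Y \bisep^{o} S'_Y$ for all $Y\in V_S$, then $\rec{X|S}\bisep\rec{X|S'}$. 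Combining this lemma with \thm{lean congruence} and the transitivity of $\bisep$ (it is an equivalence relation) then yields full congruence by induction on the structure of an arbitrary process expression.

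For the recursion-congruence lemma I would exhibit the candidate ep-bisimulation
\[
\R = \{(\,r[\rec{Y|S}/Y]_{Y\in V_S},\ r'[\rec{Y|S'}/Y]_{Y\in V_S},\ R_{r,r'}\,) \mid r\bisep^{o}r',\ \var(r)\cup\var(r')\subseteq V_S\},
\]
closed under the closure demanded by Item~\ref{epb2} of \df{ep-bisimilarity}, which for each related pair of transitions adds a successor triple at their targets. Each third component $R_{r,r'}$ is obtained from a fixed witness for $r\bisep^{o}r'$ and transported through the syntactic skeleton of $r$ and $r'$ via the unique transition-expression naming of \df{transition expression}: a transition of $r[\rec{Y|S}/Y]$ is named by a transition expression whose leading constructor is either a De Simone operator of $r$ or a recursion constructor $\RecAct/\RecIn$ at an occurrence of some $Y$. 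The enabling conditions \ref{epb1a}--\ref{epb1c} then follow from the witness together with the operator congruence contained in \thm{lean congruence}.

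The successor conditions \ref{epb2a} and \ref{epb2b} are where the De Simone format with successors does the work, and I would verify them by case analysis on the leading constructor of the transition expression naming the moving transition $v$. For an operator step at the top of $r$, the shape of the successor rule is fixed by the trigger sets $I_{\snr},I_{\sns}$ (\df{TSSS in De Simone format}), and the inclusions $I\cap I_G\subseteq I_{\textsc t}$ and $(I_{\snr}\cap I_G)\setminus I_{\sns}\subseteq I_{\textsc t}$ of \fig{inclusions} guarantee that every transition variable surviving into the target $\mathring v$ is traceable either to a premise or to the source, so a matching successor on the $r'$-side exists and again lands in $\R$ with a coherent $R'$. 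For a recursion step the recursion successor rule of \df{TSSS in De Simone format} reduces the successor of $\RecAct(Y,S,t)$ to that of its premise $t$, a transition of $\rec{S_Y|S}=S_Y[\rec{Z|S}/Z]$, which must be matched against $S'_Y[\rec{Z|S'}/Z]$. Because transitions are \emph{finite} proofs (\df{proof}), each instance of this case analysis bottoms out after finitely many steps, even though unfolding may make the proof arbitrarily deep.

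The matching across a recursion step is the main obstacle, and it is exactly what keeps the statement a conjecture. The hypothesis $S_Y\bisep^{o}S'_Y$ equates closed instances only under a \emph{single} substitution, so it delivers $S_Y[\rec{Z|S}/Z]\bisep S'_Y[\rec{Z|S}/Z]$, whereas the bisimulation game forces us to bridge from $S'_Y[\rec{Z|S}/Z]$ to $S'_Y[\rec{Z|S'}/Z]$. Closing this gap requires feeding $\rec{Z|S}\bisep\rec{Z|S'}$ --- the very statement under proof --- back through \thm{lean congruence}, i.e.\ a sound \emph{ep-bisimulation-up-to-congruence} principle combined with transitivity. Although $\bisep$ is transitive as a relation on \emph{states}, composing the transition-level witnesses $R$ while preserving the successor-coherence of Items~\ref{epb2a} and~\ref{epb2b} under that composition is delicate; establishing the soundness of such an up-to technique for ep-bisimulations is the crux I expect to be genuinely hard, and is precisely the additional machinery that \cite{vG17b} required to upgrade lean to full congruence in the setting of ordinary bisimilarity.
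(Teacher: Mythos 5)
You have not proved the statement, and neither does the paper: this is explicitly a \emph{conjecture} there, with the authors stating that the proof of full congruence is beyond the scope of the paper and noting that only \cite{Re00,vG17b} achieve full congruence via a rule format at all. So there is no proof of the paper's to match your attempt against, and your proposal itself concedes the decisive gap rather than closing it. Concretely: your candidate relation pairs $r[\rec{Y|S}/Y]$ with $r'[\rec{Y|S'}/Y]$, but the hypothesis $S_Y \bisep^{o} S'_Y$ only yields witnesses between two closed instances under \emph{one and the same} substitution, so already the definition of the third components $R_{r,r'}$ presupposes an ep-bisimulation between $r[\rec{Y|S}/Y]$ and $r'[\rec{Y|S'}/Y]$ --- i.e.\ the conjecture itself, or equivalently a sound ep-bisimulation-up-to-congruence-and-transitivity principle. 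Moreover, ``transporting $R_{r,r'}$ through the syntactic skeleton of $r$ and $r'$'' via the naming of \df{transition expression} is not well defined when $r$ and $r'$ are syntactically unrelated terms that are merely ep-bisimilar under all closed instantiations: the transition-expression structure is tied to syntax, and related transitions of $r$ and $r'$ need not have matching constructors.

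The genuinely hard point, which you correctly identify but do not resolve, is the soundness of the up-to technique at the level of the \emph{transition} relations $R$: while $\bisep$ is an equivalence on states (Lemma 2.4, via relational composition of the $R$-components), an up-to-congruence closure must compose these transition-level witnesses so that the successor conditions \ref{epb2a} and \ref{epb2b} are preserved through the composite, and after a recursion unfolding the targets are only of the required shape up to such a composite. Establishing this is precisely the ``additional machinery'' that \cite{vG17b} needed in the far simpler setting of ordinary bisimilarity, where no successor-coherence constraint exists. Your sketch is a reasonable map of where a future proof must do its work --- and in that sense it is consistent with why the paper leaves the statement open --- but as it stands it assumes the crux rather than proving it, so the conjecture remains unestablished.
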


\section{A Larger Case Study: The Process Algebra \ABCdE\label{sec:abcde}}
The \emph{Algebra of Broadcast Communication with discards and Emissions} (\ABCdE) stems from \cite{GHW21ea}.
It combines CCS~\cite{Mi90ccs}, its extension with broadcast communication~\cite{Prasad91,GH15a,vG19},
  and its extension with signals~\cite{Be88b,CDV09,EPTCS255.2,vG19}.
Here, we extend CCS as presented in \Sec{CCS}.

\ABCdE is parametrised with sets
  $\Ch$ of \emph{handshake communication names} as used in CCS,
  $\B$ of \emph{broadcast communication names} and
  $\Sig$ of \emph{signals}.
  $\bar{\Sig} \coloneqq \{\bar{s} \mid s\in\Sig\}$ is the set of signal emissions.
The collections $\B!$, $\B?$ and $\B{:}$ of \emph{broadcast}, \emph{receive}, and \emph{discard} actions
  are given by $\B\sharp \coloneqq \{b\sharp \mid b\in\B\}$ for $\sharp\in\{!,?,{:}\}$.
$\Act \coloneqq \Ch \djcup \bar{\Ch} \djcup \{\tau\} \djcup \B! \djcup \B? \djcup \Sig$
  is the set of \emph{actions}, with $\tau$ the \emph{internal action},
  and $\Lab \coloneqq \Act \djcup \B{:} \djcup \bar{\Sig}\!$ is the set of \emph{transition labels}.
Complementation extends to $\Ch \djcup \bar{\Ch} \djcup \Sig \djcup \bar{\Sig}$ by $\bar{\bar{c}} \coloneqq c$.

Below, $c$ ranges over $\Ch \djcup \bar{\Ch} \djcup \Sig \djcup \bar{\Sig}$,
  $\eta$ over $\Ch \djcup \bar{\Ch} \djcup \{\tau\} \djcup \Sig \djcup \bar{\Sig}$,
  $\alpha$ over $\Act$,
  $\ell$ over $\Lab$,
  $\gamma$ over $\In \coloneqq \Lab{\setminus}\Act$,
  $b$ over $\B$,
  $\sharp,\sharp_1,\sharp_2$ over $\{!,?,:\}$, 
  $s$ over $\Sig$, $S$ over recursive specifications and $X$ over $V_S$.
A \emph{relabelling} is a function $f:(\Ch\to\Ch)\djcup(\B\to\B)\djcup(\Sig\to\Sig)$;
  it extends to $\Lab$ by $f(\bar{c})=\overline{f(c)}$, $f(\tau)\coloneqq\tau$ and $f(b\sharp)=f(b)\sharp$.

Next to the constant and operators of CCS,
the process signature $\Sigma$ of \ABCdE features a unary \emph{signalling} operator $\_\!\_\signals s$ for each signal $s \in \Sig$.

\begin{table}[t]
  \vspace{-1.5ex}
  \caption{Structural operational semantics of \ABCdE\label{tab:ABCdE transition rules}}
  \vspace{1ex}
  \centering
  \framebox{$\begin{array}{ccc}
    \displaystyle\frac{}{\mathbf{0} \goto{b{:}} \mathbf{0}}                                   \mylabelA{disNil}{b{:}\mathbf{0}} &
    \displaystyle\frac{\alpha\neq b?}{\alpha.x \goto{b{:}} \alpha.x}                          \mylabelA{disAlpha}{b{:}\alpha.} &
    \displaystyle\frac{x \goto{b{:}} x',~ y \goto{b{:}} y'}{x+y \goto{b{:}} x'+y'}            \mylabelA{plusC}{+^{}_{\!\!\mathrm{\scriptscriptstyle C}}} \\[3ex]
    \multicolumn{3}{c}{
      \displaystyle\frac{x \goto{b\sharp_1} x',~ y \goto{b\sharp_2} y' \quad {\color{Green}(\sharp_1\circ\sharp_2 = \sharp \neq \_)}}
        {x|y \goto{b\sharp} x'|y'}                                                            \mylabelA{parB}{|^{}_\mathrm{\scriptscriptstyle C}}
        \color{Green}\textstyle\qquad~\text{with}~
        \begin{array}{c@{\ }|@{\ }c@{\ \ }c@{\ \ }c}
          \scriptstyle \circ & \scriptstyle ! & \scriptstyle ? & \scriptstyle : \\
          \hline
          \scriptstyle ! & \scriptstyle \_ & \scriptstyle ! & \scriptstyle ! \\
          \scriptstyle ? & \scriptstyle !  & \scriptstyle ? & \scriptstyle ? \\
          \scriptstyle : & \scriptstyle !  & \scriptstyle ? & \scriptstyle : \\
        \end{array}} \\[5.5ex]
    \displaystyle\frac{}{x\signals s \goto{\bar{s}} x\signals s}                              \mylabelA{sigS}{(\sigsyn{s})} &
    \displaystyle\frac{x \goto{\bar{s}} x'}{x+y \goto{\bar{s}} x'+y}                          \mylabelA{plusLE}{+^\mathrm{\!\scriptscriptstyle e}_{\!\!\scriptscriptstyle L}} &
    \displaystyle\frac{y \goto{\bar{s}} y'}{x+y \goto{\bar{s}} x+y'}                          \mylabelA{plusRE}{+^{e}_{\!\!\Right}} \\[4ex]
    \displaystyle\frac{x \goto{\alpha} x'}{x\signals s \goto{\alpha} x'}                      \mylabelA{sigAct}{{}\signals s_\Act} &
    \displaystyle\frac{x \goto{\gamma} x'}{x\signals s \goto{\gamma} x'\signals s}            \mylabelA{sigInd}{{}\signals s_\In} &
    \displaystyle\frac{\rec{S_X|S} \goto{\gamma} y}{\rec{X|S} \goto{\gamma} \rec{X|S}}        ~\myrefA{recIn}
  \end{array}$}
\vspace{-1ex}
\end{table}

The semantics of \ABCdE is given by the transition rule
templates displayed in Tables \ref{tab:CCS transition rules} and \ref{tab:ABCdE transition rules}.
The latter augments CCS with mechanisms for broadcast communication and signalling.
The rule $\myrefA{parB}$ presents the core of broadcast communication~\cite{Prasad91}, where any
  broadcast-action $b!$ performed by a component in a parallel composition needs to synchronise with
  either a receive action $b?$ or a discard action $b{:}$ of any other component.
In order to ensure associativity of the parallel composition, rule $\myrefA{parB}$ also allows
  receipt actions of both components ($\sharp_1 = \sharp_2 = \mathord{?}$),
  or a receipt and a discard, to be combined into a receipt action.

A transition \plat{$p \goto{~b:} q$} is derivable only if $q=p$.
It indicates that the process $p$ is unable to receive a broadcast communication $b!$ on channel $b$.
The Rule $\myrefA{disNil}$ allows the nil process (inaction) to discard any incoming message; 
  in the same spirit $\myrefA{disAlpha}$ allows a message to be discarded by a process that cannot receive it.
A process offering a choice can only perform a discard-action 
if both choice-options can discard the corresponding broadcast (Rule {$\myrefA{plusC}$}). 
Finally, by rule $\myrefA{recIn}$, a recursively defined process $\rec{X|S}$ can discard a broadcast
  iff $\rec{S_X|S}$ can discard it.
The variant $\myrefA{recIn}$ of $\myrefA{recAct}$ is introduced to maintain the property that
  $\target(\theta)=\source(\theta)$ for any indicator transition $\theta$.

A signalling process $p\signals s$ emits the signal $s$ to be read by another process. 
A typically example is a traffic light being red.
Signal emission is modelled as an indicator transition, which does not change the state of the emitting process.
The first rule $\myrefA{sigS}$ models the emission $\bar{s}$ of signal $s$ to the environment. 
The environment (processes running in parallel) can read the signal by performing a read action $s$.
This action synchronises with the emission $\bar{s}$, via rule $\myrefA{parH}$ of \tab{CCS transition rules}.
Reading a signal does not change the state of the emitter.

Rules $\myrefA{plusLE}$ and $\myrefA{plusRE}$ describe the interaction between signal emission and choice.
Relabelling and restriction are handled by rules $\myrefA{restr}$ and $\myrefA{relab}$ of
\tab{CCS transition rules}, respectively.
These operators do not prevent the emission of a signal, and emitting signals never changes the state of the emitting process.
Signal emission $p\signals s$ does not block other transitions of $p$.

It is trivial to check that the TSS of \ABCdE is in De Simone format.

\begin{table}[t]
  \vspace{-1.5ex}
  \caption{Transition signature of \ABCdE}\label{tab:ABCdE transition signature}
  \vspace{1ex}
  \centering
  \renewcommand{\arraystretch}{1.5}
  $\begin{array}{|@{\;}c@{\;}|@{\;}c@{\;}|@{\;}c@{\;}|@{\;}c@{\;}|@{\;}c@{\;}|@{\;}c@{\;}|@{\;}c@{\;}|@{\;}c@{\;}|@{\;}c@{\;}|@{\;}c@{\;}|@{\;}c@{\;}|@{\;}c@{\;}|@{\;}c@{\;}|@{\;}c@{\;}|@{\;}c@{\;}|@{\;}c@{\;}|@{\;}c@{\;}|}
    \hline
    \text{Constructor} &
    \myrefA{actAlpha} & \myrefA{sigS} &
    \myrefA{disNil} & \myrefA{disAlpha} &
    \myrefA{plusL} & \myrefA{plusR} & \myrefA{plusC} & \myrefA{plusLE} & \myrefA{plusRE} &
    \myrefA{parL} & \mylabelA{parCold}{|^{}_\mathrm{\scriptscriptstyle C}} & \myrefA{parR} &
    \myrefA{restr} & \myrefA{relab} &
    \myrefA{sigAct} & \myrefA{sigInd} \\
    \hline
    \text{Arity} &
    1 & 1 &
    0 & 1 &
    2 & 2 & 2 & 2 & 2 &
    2 & 2 & 2 &
    1 & 1 &
    1 & 1 \\
    \hline
    \text{Trigger Set} &
    \emptyset & \emptyset &
    \emptyset & \emptyset &
    \{1\} & \{2\} &
    \{1{,}2\} & \{1\} & \{2\} & \{1\} & \{1{,}2\} &
    \{2\} & \{1\} & \{1\} &
    \{1\} & \{1\} \\
    \hline
  \end{array}$
\end{table}

The transition signature of \ABCdE (\tab{ABCdE transition signature}) is completely determined by the set of
transition rule templates in Tables \ref{tab:CCS transition rules} and \ref{tab:ABCdE transition rules}.
We have united the rules for handshaking and broadcast communication
  by assigning the same name $\myrefA{parCold}$ to all their instances. 
  When expressing transitions in ABCdE as expressions,
   we use infix notation for the binary transition constructors,
  and prefix or postfix notation for unary ones.
For example, the transition $\myrefA{disNil}()$ is shortened to $\myrefA{disNil}$,
  $\myrefA{actAlpha}(p)$ to $\myrefA{actAlpha}p$,
  $\myrefA{restr}(t)$ to $t\myrefA{restr}$,
  and $\myrefA{parL}(t,p)$ to $t {\myrefA{parL}} p$.

\begin{wrapfigure}[12]{r}{0.34\textwidth}
  \vspace{-2.5ex}
  
  \renewcommand{\arraystretch}{0.95}
  $\begin{array}{|c|c|}
    \hline
    \text{Meta} & \text{Variable Expression} \\
    \hline
    P & x_1 \\
    Q & x_2 \\
    P' & y'_1 \\
    Q' & y'_2 \\
    t & (\tx_1 \dblcolon x_1 \goto{\mathit{xa}_1} x'_1) \\
    u & (\tx_2 \dblcolon x_2 \goto{\mathit{xa}_2} x'_2) \\
    v & (\ty_1 \dblcolon x_1 \goto{\mathit{ya}_1} y'_1) \\
    w & (\ty_2 \dblcolon x_2 \goto{\mathit{ya}_2} y'_2) \\
    t' & (\tz_1 \dblcolon y'_1 \goto{\mathit{za}_1} z'_1) \\
    u' & (\tz_2 \dblcolon y'_2 \goto{\mathit{za}_2} z'_2) \\
    \hline
  \end{array}$
\end{wrapfigure}

\mbox{}
\tab{ABCdE successor rules} extends the successor relation of CCS (\tab{CCS successor rules}) to ABCdE. 
$P,Q$ are process variables,
  $t,v$ transition variables enabled at $P$,
  $u,w$ transition variables enabled at $Q$,
  $P',Q'$ the targets of $v,w$, respectively and
  $t',u'$ transitions enabled at $P',Q'$, respectively.
To express those rules in the same way as \df{TSSS in De Simone format},
  we replace the metavariables $P$, $Q$, $t$, $u$, etc.\ with variable expressions as indicated on the right.
  Here $\mathit{xa}_i$, $\mathit{ya}_i$, $\mathit{za}_i$ are label variables that should be instantiated to match
  the trigger of the rules and side conditions.
  As \ABCdE does not feature operators of arity~${>}2$, the index~$i$ from \df{TSSS in De Simone format}
  can be 1 or 2 only.
  
\begin{table}[bt]
  \vspace{-1.5ex}
  \caption{Successor rules for \ABCdE\label{tab:ABCdE successor rules}}
  \vspace{1ex}
  \centering
  \framebox{$\begin{array}{@{}c@{\ \ \ \ }c@{\ \ \ \ }c@{\ \ \ \ }c@{}}
    \multicolumn{4}{c}{
      \displaystyle\frac{{\color{Green} \ell(t)\in\{b?,b{:}\}}}
        {\hyperlink{lab:actAlpha}{{\color{blue} \actsyn{b?}}}P \leadsto_{\hyperlink{lab:actAlpha}{{\color{blue} \actsyn{b?}}}P} t}\mylabelS{2a}^b \qquad
      \displaystyle\frac{{\color{Green} \ell(\zeta)\in\B{:}\cup\bar{\Sig}}}{\chi \leadsto_\zeta \chi}\mylabelS{1} \qquad
      \displaystyle\frac{{\color{Green} \ell(t)\in\{b?,b{:}\}}}{{\myrefA{disAlpha}}P \leadsto_{{\myrefA{actAlpha}}P} t}\mylabelS{2b}^b} \\[4ex]
    \displaystyle\frac{t \leadsto_v t'}
      {\begin{array}{@{}c@{}}
        t {\myrefA{plusLE}} Q \leadsto_{v {\myrefA{plusL}} Q} t'
      \end{array}}\mylabelS{3a } &
    \displaystyle\frac{t \leadsto_v t'}{t {\myrefA{plusC}} u \leadsto_{v {\myrefA{plusL}} Q} t'}\mylabelS{3b} &
    \displaystyle\frac{u \leadsto_w u'}
      {\begin{array}{@{}c@{}}
        P {\myrefA{plusRE}} u \leadsto_{P {\myrefA{plusR}} w} u'
        \end{array}}\mylabelS{4a } &
    \displaystyle\frac{u \leadsto_w u'}{t {\myrefA{plusC}} u \leadsto_{P {\myrefA{plusR}} w} u'}\mylabelS{4b} \\[4ex]

    \multicolumn{2}{c}{
      \displaystyle\frac{{\color{Green} \ell(t)=b? \quad \ell(u')\in\{b?,b{:}\}}}
        {\begin{array}{@{}c@{}}
          t {\myrefA{plusL}} Q \leadsto_{P {\myrefA{plusR}} w} u' \\
          t {\myrefA{plusLE}} Q \leadsto_{P {\myrefA{plusR}} w} u'
        \end{array}}\mylabelS{5}^b} &
    \multicolumn{2}{c}{
      \displaystyle\frac{{\color{Green} \ell(u)=b? \quad \ell(t')\in\{b?,b{:}\}}}
        {\begin{array}{@{}c@{}}
          P {\myrefA{plusR}} u \leadsto_{v {\myrefA{plusL}} Q} t' \\
          P {\myrefA{plusRE}} u \leadsto_{v {\myrefA{plusL}} Q} t'
        \end{array}}~\mylabelS{6}^b} \\[7ex]
    
    \multicolumn{2}{c}{
      \displaystyle\frac{t \leadsto_v t'}
        {\begin{array}{@{}c@{}}
          \RecIn(X,S,t) \leadsto_{\RecAct(X,S,v)} t'
        \end{array}}\mylabelS{11c }} &
    \multicolumn{2}{c}{
      \displaystyle\frac{t \leadsto_v t'}
        {\begin{array}{@{}c@{}}
          t{\myrefA{sigAct}} \leadsto_{v{\myrefA{sigAct}}} t' \\
          t{\myrefA{sigInd}} \leadsto_{v{\myrefA{sigAct}}} t'
        \end{array}}\mylabelS{11d}}
  \end{array}$}\vspace{-1pt}
\end{table}

To save duplication of rules~\myrefS{8b}, \myrefS{8c}, \myrefS{9b}, \myrefS{9c} and~\myrefS{10} 
we have assigned the same name $\myrefA{parCold}$ to the rules for
handshaking and broadcast communication.
The intuition of the rules of \tab{ABCdE successor rules} is explained in detail in \cite{GHW21ea}. 

In the naming convention for transitions from \cite{GHW21ea} the sub- and superscripts of the transition constructors
  ${\color{blue} +}$, ${\color{blue} |}$ and ${\color{blue} {}\signals s}$, and of the recursion construct, were suppressed.
In most cases that yields no ambiguity, as the difference between $\myrefA{parL}$ and $\myrefA{parR}$, for instance,
  can be detected by checking which of its two arguments are of type transition versus process.
Moreover, it avoids the duplication in rules~\myrefS{3a}, \myrefS{4a}, \myrefS{5}, \myrefS{6}, \myrefS{11c} and~\myrefS{11d}.
The ambiguity between $\myrefA{plusL}$ and $\myrefA{plusLE}$ (or $\myrefA{plusR}$ and $\myrefA{plusRE}$)
  was in \cite{GHW21ea} resolved by adorning rules~\hyperlink{lab:3a}{{\color{orange} 3}}--\myrefS{6} with a side condition $\ell(v)\notin\bar{\Sig}$
  or $\ell(w)\notin\bar{\Sig}$, and the ambiguity between $\myrefA{recAct}$ and $\myrefA{recIn}$ (or $\sigAct$ and $\sigInd$)
  by adorning rules~\myrefS{11c} and~\myrefS{11d} with a side condition $\ell(v)\in\Act$; this is not needed here.

It is easy to check that all rules are in the newly introduced De Simone format, 
except Rule~\myrefS{1}.
However, this rule can be converted in to a collection of De Simone rules by substituting $\snr(\mathit{xe}_1,\dots,\mathit{xe}_n)$ for $\chi$
  and $\sns(\mathit{ye}_1,\dots,\mathit{ye}_n)$ for $\zeta$, adding a premise in the form of $\mathit{xe}_i \leadsto_{\mathit{ye}_i} (\tz_i \dblcolon y'_i \goto{\mathit{za}_i} z'_i))$ if $i\in I_\snr \cap I_\sns$, for each pair of rules of the same type named $\snr$ and $\sns$.%
\footnote{This yields $1^2 + 2 \cdot 1 + 5 \cdot 3 + 3 \cdot 2 + 2 \cdot 1 = 26$ rules of types $(\mathbf{0},0)$, $(\alpha.\_\!\_,1)$,
  $(+,2)$, $({}\signals s,1)$ and $\rec{X|S}$ not included in Tables~\ref{tab:CCS successor rules} and \ref{tab:ABCdE successor rules}.}
The various occurrences of \myrefS{1} in \fig{inclusions} refer to these substitution instances.
It follows that $\bisep$ is a congruence for the operators of \ABCdE, as well as a lean congruence for recursion.

\section{Related Work \& Conclusion}

In this paper we have added a successor relation to the well-known De Simone format.
This has allowed us to prove the general result that enabling preserving 
bisimilarity -- a finer semantic equivalence relation than strong 
bisimulation -- is a lean congruence for all languages with a structural operational semantics
within this format. We do not cover full congruence yet, as proofs for general
recursions are incredible hard and usually excluded from work justifying semantic equivalences.

There is ample work on congruence formats in the literature. 
Good overview papers are~\cite{AFV00,MRG07}. 
For system description languages that do not capture time, probability or other useful extensions to
standard process algebras, all congruence formats target strong bisimilarity, or some semantic
equivalence or preorder that is strictly coarser than strong bisimilarity. As far as we know, the
present paper is the first to define a congruence format for a semantic equivalence that is finer than strong bisimilarity.

Our congruence format also ensures a lean congruence for recursion. So far,
the only papers that provide a rule format yielding a congruence property for recursion
are \cite{Re00} and \cite{vG17b}, and both of them target strong bisimilarity.

In Sections \ref{sec:CCS} and \ref{sec:abcde}, we have applied our format to show lean congruence of ep-bisimilarity for the process
algebra CCS and \ABCdE, respectively.  This latter process algebra features broadcast communication and signalling.
These two features are representative for issues that may arise elsewhere, and help to
ensure that our results are as general as possible. Our congruence format can
effortlessly be applied to other calculi like CSP~\cite{BHR84} or ACP~\cite{BW90}.

In order to evaluate ep-bisimilarity on process algebras like CCS, CSP, ACP or \ABCdE,
their semantics needs to be given in terms of labelled transition systems extended with a successor
relation $\leadsto$.
This relation models concurrency between transitions enabled in the same state, and also
tells what happens to a transition if a concurrent transition is executed first.
Without this extra component, labelled transition systems lack the necessary information to capture
liveness properties in the sense explained in the introduction.

In a previous paper \cite{GHW21ea} we already gave such a semantics to \ABCdE.
The rules for the successor relation presented in \cite{GHW21ea}, displayed in
Tables \ref{tab:CCS successor rules} and \ref{tab:ABCdE successor rules}, are now
seen to fit our congruence format. We can now also conclude that ep-bisimulation is a lean congruence for \ABCdE.
In \cite[Appendix~B]{GHW21} we contemplate a very different approach for defining the relation $\leadsto$.
Following \cite{vG19}, we understand each transition as the synchronisation of a number of elementary particles called \emph{synchrons}.
Then relations on synchrons are proposed in terms of which the $\leadsto$-relation is defined.
It is shown that this leads to the same $\leadsto$-relation as the operational approach from \cite{GHW21ea} and
Tables \ref{tab:CCS successor rules} and \ref{tab:ABCdE successor rules}.

\newpage
\bibliography{refs}

\newpage
\appendix

\section{Proof of Lean Congruence}\label{app:a congruence result}

Before presenting the proof of lean congruence (\thm{lean congruence}) we 
present some concepts needed later on: we first introduce consistency of substitutions;
then we  define \emph{standard} open transitions and relate closed substitutions.

Substitutions are partial functions.
Often only a subset of the substitution function affects the substitution instance.
To compare two substitutions with regard to a given domain, we define function consistency: 
two (partial) functions $f,g$ with domains $D_f$ and $D_g$, respectively, are \emph{consistent}\index{consistent} if for all $\chi\in D_f\cap D_g$, $f(\chi)=g(\chi)$.
  $f,g$ are consistent on $D\subseteq D_f\cap D_g$ if for all $\chi\in D$, $f(\chi)=g(\chi)$.

\begin{definitionc}{Standard Open Transition}\label{df:standard open transition}\upshape
  An open transition $\mathring{t}$ is \emph{standard}\index{standard} if there exists an injective \emph{target variable function}\index{target variable function}
    $\Vtar[\mathring{t}]\!:\Tvar(\mathring{t})\mathrel\to\Var\setminus\var(\Osrc(\mathring{t}))$ such that
    for all $\tx\mathbin\in\Tvar(\mathring{t})$,\\ \plat{$\beta_{\mathring{t}}(\tx)=(x \goto{a} \Vtar(\tx))$} for some $x\in\Var$.
  $\Vtar[\mathring{t}](\tx)$ is then called the \emph{target variable} of $\tx$ in $\mathring{t}$.

  The set $\var(\mathring{t})$ of \emph{free process variables} of a standard open transition $\mathring{t}$
  contains those that occur either in $\var(\Osrc(\mathring{t}))$ or
  as the target variables of some transition variable $\tx$ in $\Tvar(\mathring{t})$.
\end{definitionc}

\begin{definitionc}{Relating Closed Substitutions}\label{df:relating closed substitutions}\upshape
  Let \tss be a TSS and $C\!:\Var\to\Pow(\Tr\times\Tr)$.
  Two closed $\Sigma$-substitutions $\rho,\nu$ are $C$-ep-bisimilar on $p\in\IP^r(\Sigma)$ if, 
    for all $x\in\var(p)$, $(\rho(x),\nu(x),C(x))\in\R$ for some ep-bisimulation $\R$.

  Fix a TSSS \tsss. 
  Two closed $\itt$-substitutions $\Trho,\Tnu$ are $C$-ep-bisimilar on a standard open transition $\mathring{t}$ if
  \leftmargini20pt
    \begin{itemize}
      \item
       $\Trho{\upharpoonright}\Var,\Tnu{\upharpoonright}\Var$
        are $C$-ep-bisimilar on all process expressions in $\mathring{t}$, and
      \item for all $\tx\mathbin\in\Tvar(\mathring{t})$, if $\beta_{\mathring{t}}(\tx)\mathbin=(x \goto{a} y)$ then 
        \leftmarginii12pt
          \begin{itemize}
                \item $\ell(\Trho(\tx))\mathbin=\ell(\Tnu(\tx))\mathbin=a$, $\Trho(y)\mathbin=\target(\Trho(\tx))$, $\Tnu(y)\mathbin=\target(\Tnu(\tx))$, $(\Trho(\tx),\Tnu(\tx))\mathbin\in C(x)$,             
                \item if $(t,u)\in C(x)$ and $t\leadsto_{\Trho(\tx)}t'$
                        then $\exists\, u'.~u\leadsto_{\Tnu(\tx)} u' \land (t',u')\in C(y)$ and 
                \item if $(t,u)\in C(x)$ and $u\leadsto_{\Tnu(\tx)}u'$
                        then $\exists\, t'.~ t\leadsto_{\Trho(\tx)}t' \land (t',u')\in C(y)$.
              \end{itemize}
    \end{itemize}
\end{definitionc}

\begin{observation}\label{obs:C existence}\upshape
  Fix a TSS \tss.
  If $\rho,\nu$ are two closed $\Sigma$-substitutions such that $\forall x\in\Var.\linebreak \rho(x) \bisep \nu(x)$, 
    then, for all $p\in\IP^r(\Sigma)$, $\rho,\nu$ are $C$-ep-bisimilar on $p$ for some $C:\Var\to\Pow(\Tr\times\Tr)$.
\end{observation}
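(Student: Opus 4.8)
The plan is to read the witness $C$ directly off the hypothesis by unfolding the definition of ep-bisimilarity, so the argument is essentially bookkeeping. By \df{ep-bisimilarity}, the assumption $\rho(x) \bisep \nu(x)$ means precisely that there is an ep-bisimulation containing a triple whose first two components are $\rho(x)$ and $\nu(x)$. Hence, for each $x \in \Var$, I would fix an ep-bisimulation $\R_x$ together with a relation $R_x \subseteq \en(\rho(x)) \times \en(\nu(x))$ such that $(\rho(x), \nu(x), R_x) \in \R_x$. Since $\Var$ is infinite, this selection invokes the axiom of choice, but nothing deeper.

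I would then define $C : \Var \to \Pow(\Tr \times \Tr)$ by $C(x) := R_x$. Because $C$ is defined on all of $\Var$, this one function will serve for every $p$ simultaneously, matching the ``for all $p$'' in the statement.

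To conclude, I would check the requirement of \df{relating closed substitutions}: for an arbitrary $p \in \IP^r(\Sigma)$ and any $x \in \var(p) \subseteq \Var$, the triple $(\rho(x), \nu(x), C(x)) = (\rho(x), \nu(x), R_x)$ lies in the ep-bisimulation $\R_x$ by construction. Note that the clause ``$(\rho(x),\nu(x),C(x)) \in \R$ for some ep-bisimulation $\R$'' quantifies $\R$ \emph{inside} the universal quantifier over $x$, so it is harmless that the witnessing bisimulation $\R_x$ depends on $x$; there is no need to amalgamate the $\R_x$ into a single relation, although this would also work since a union of ep-bisimulations is again an ep-bisimulation.

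I expect no genuine obstacle: the observation is a repackaging of the definition of $\bisep$, turning the existential ``for some $R$'' hidden inside each instance $\rho(x) \bisep \nu(x)$ into an explicit choice function $C$. The only point deserving a moment's care is the quantifier order just highlighted, which is exactly what keeps the argument this short.
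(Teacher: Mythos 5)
Your proof is correct and matches the paper's approach: the paper states this as an Observation with no explicit proof, treating it as an immediate consequence of the definitions, and your argument --- choosing for each $x\in\Var$ a relation $R_x$ with $(\rho(x),\nu(x),R_x)$ in some ep-bisimulation $\R_x$ and setting $C(x):=R_x$ --- is precisely that definitional unfolding. Your observation about the quantifier order (the witnessing ep-bisimulation in \df{relating closed substitutions} may depend on $x$, so no amalgamation of the $\R_x$ is needed) is exactly the point that makes the statement immediate.
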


\begin{observation}\label{obs:transition bisimilarity property}\upshape
  Fix a TSSS $\tsss$.
  If $\mathring{t}$ is a standard open transition and $\Trho,\Tnu$ are two closed $\itt$-substitutions
    which are $C$-ep-bisimilar on $\mathring{t}$ for some $C\!:\Var\to\Pow(\Tr\times\Tr)$ then $\Trho,\Tnu$ match $\mathring{t}$.
\end{observation}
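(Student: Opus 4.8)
The plan is to unfold the definition of \emph{match} from \df{transition substitution} and verify each of its requirements against the data supplied by \df{relating closed substitutions}. Since $\Trho,\Tnu$ are closed, every $\Trho(\tx)$ and $\Tnu(\tx)$ lies in $\Tr$, so that $\Osrc,\Otar,\Oell$ evaluated on these transitions coincide with $\source,\target,\ell$. Thus, for $\Trho$ to match $\mathring{t}$, it suffices to show that for each $\tx\in\Tvar(\mathring{t})$ with $\beta_{\mathring{t}}(\tx)=(p\goto{a}q)$ we have $\ell(\Trho(\tx))=a$, $\source(\Trho(\tx))=p[\Trho{\upharpoonright}\Var]$ and $\target(\Trho(\tx))=q[\Trho{\upharpoonright}\Var]$. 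Because $\mathring{t}$ is standard, \df{standard open transition} forces $\beta_{\mathring{t}}(\tx)=(x\goto{a}\Vtar(\tx))$ for a single process variable $x$ and the target variable $y:=\Vtar[\mathring{t}](\tx)$. Hence $p[\Trho{\upharpoonright}\Var]=\Trho(x)$ and $q[\Trho{\upharpoonright}\Var]=\Trho(y)$, and the three obligations reduce to $\ell(\Trho(\tx))=a$, $\source(\Trho(\tx))=\Trho(x)$ and $\target(\Trho(\tx))=\Trho(y)$ (and symmetrically for $\Tnu$).

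First I would read off two of these directly. The second bullet of \df{relating closed substitutions}, applied to this $\tx$, states precisely that $\ell(\Trho(\tx))=\ell(\Tnu(\tx))=a$, $\Trho(y)=\target(\Trho(\tx))$ and $\Tnu(y)=\target(\Tnu(\tx))$. These are exactly the label and target conditions for both $\Trho$ and $\Tnu$. What that bullet does \emph{not} hand us is the source condition $\source(\Trho(\tx))=\Trho(x)$, so this is the only step needing an argument.

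For the source condition I would invoke the enabling clause of an ep-bisimulation. The same bullet of \df{relating closed substitutions} also yields $(\Trho(\tx),\Tnu(\tx))\in C(x)$. Its first bullet asserts that $\Trho{\upharpoonright}\Var$ and $\Tnu{\upharpoonright}\Var$ are $C$-ep-bisimilar on all process expressions occurring in $\mathring{t}$; in particular the source $x$ of the leaf literal $\beta_{\mathring{t}}(\tx)$ is such a process expression, and $\var(x)=\{x\}$, so there is an ep-bisimulation $\R$ with $(\Trho(x),\Tnu(x),C(x))\in\R$. By the first clause of \df{ep-bisimilarity} we have $C(x)\subseteq\en(\Trho(x))\times\en(\Tnu(x))$, whence $(\Trho(\tx),\Tnu(\tx))\in C(x)$ gives $\Trho(\tx)\in\en(\Trho(x))$, i.e.\ $\source(\Trho(\tx))=\Trho(x)$; the same reasoning gives $\source(\Tnu(\tx))=\Tnu(x)$. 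Collecting the three equalities for every $\tx\in\Tvar(\mathring{t})$ shows that both $\Trho$ and $\Tnu$ match $\mathring{t}$.

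The only genuinely non-routine point, which I would single out as the main obstacle, is the justification that $x$ is a free process variable of some process expression in $\mathring{t}$, so that the first bullet of \df{relating closed substitutions} actually delivers the triple $(\Trho(x),\Tnu(x),C(x))\in\R$ feeding the enabling clause. This is where standardness is essential: it guarantees that the source of each leaf literal is a bare variable $x$, and (more robustly) the De Simone shape of the proof tree ensures such variables survive as free variables of $\Osrc(\mathring{t})$, so $x\in\var(\Osrc(\mathring{t}))$. Everything else is a matter of aligning the clauses of \df{relating closed substitutions} with those of \df{transition substitution}.
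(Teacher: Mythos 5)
Your proof is correct. The paper offers no proof of this statement at all — it is presented as an Observation, i.e.\ as immediate from unfolding \df{transition substitution}, \df{standard open transition} and \df{relating closed substitutions} — and your verification supplies exactly the reasoning left implicit, including the one step that genuinely needs an argument: obtaining the source condition $\source(\Trho(\tx))=\Trho(x)$ from $(\Trho(\tx),\Tnu(\tx))\in C(x)$ together with the enabling clause $C(x)\subseteq\en(\Trho(x))\times\en(\Tnu(x))$, which clause 1 of \df{ep-bisimilarity} provides because the first bullet of \df{relating closed substitutions} places $(\Trho(x),\Tnu(x),C(x))$ in some ep-bisimulation.
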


\noindent
The next lemma allows us to rename transition variables and target variables in standard open transitions.

\begin{lemma}\label{lem:renaming open transition}\upshape
  Fix a TSSS $\tsss$.
  Suppose $p\in\IP^r(\Sigma)$ is a process expression, $\mathring{t}\in\Oen(p)$ is a standard open transition, $C\!:\Var\to\Pow(\Tr\times\Tr)$ and
    $\Trho,\Tnu$ are two closed $\itt$-substitutions which are $C$-ep-bisimilar on $\mathring{t}$.
  If $W \subseteq \Var\setminus\var(p)$ and $W^\TT \subseteq \TVar$ are two infinite sets of variables, then there exist
    a standard open transition $\mathring{t}'\in\Oen(p)$, two closed $\itt$-substitutions $\Trho',\Tnu'$ and $C'\!:\Var\to\Pow(\Tr\times\Tr)$
    such that $\Tvar(\mathring{t}') \subseteq W^\TT$, the image of $\Vtar[\mathring{t}']$ is a subset of $W$,
    $\mathring{t}'[\Trho']=\mathring{t}[\Trho]$, $\mathring{t}'[\Tnu']=\mathring{t}[\Tnu]$, $C'$ is consistent with $C$ on $\var(p)$,
    $\Trho',\Tnu'$ are consistent with $\Trho,\Tnu$ respectively on $\var(p)$ and are $C'$-ep-bisimilar on $\mathring{t}'$.
\end{lemma}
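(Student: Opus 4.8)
The plan is to prove this as an $\alpha$-conversion lemma: $\mathring t'$ will be obtained from $\mathring t$ by a uniform, injective renaming of its transition variables into $W^\TT$ and of its target variables into $W$, and $\Trho',\Tnu',C'$ will be defined by transporting $\Trho,\Tnu,C$ back along this renaming. First I would note that an open transition is a finitely branching (premises are finite lists by \df{TSS}) and well-founded (by \df{open proof}) tree, hence finite by K\"onig's lemma, so $\Tvar(\mathring t)$ and the image $Y:=\Vtar[\mathring t](\Tvar(\mathring t))$ are finite. Since $W^\TT$ and $W$ are infinite I can fix injections $g:\Tvar(\mathring t)\to W^\TT$ and $h:Y\to W$. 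By \df{standard open transition} the free process variables of $\mathring t$ are the disjoint union $\var(p)\djcup Y$ (target variables lie in $\Var\setminus\var(\Osrc(\mathring t))=\Var\setminus\var(p)$), so I extend $h$ to a process renaming $\pi:\Var\to\Var$ that is the identity on $\var(p)$ and equals $h$ on $Y$. Then $\mathring t'$ is the result of applying $g$ to every transition variable and $\pi$ to every process variable throughout $\mathring t$.

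The next step is to check $\mathring t'$ has the required shape. A uniform injective renaming turns an open proof into an open proof, since it commutes with substitution instances and preserves leaf-consistency (injectivity of $g$ keeps distinct proof variables distinct and renames repeated ones identically); hence $\mathring t'\in\IT^r\itt$. As $\pi$ fixes $\var(p)$ and the free variables of $p$ lie in $\var(p)$, we get $\Osrc(\mathring t')=p[\pi]=p$, i.e.\ $\mathring t'\in\Oen(p)$. Setting $\Vtar[\mathring t'](g(\tx)):=\pi(\Vtar[\mathring t](\tx))$ yields an injective target variable function with image $h(Y)\subseteq W$, so $\mathring t'$ is standard, and by construction $\Tvar(\mathring t')=g(\Tvar(\mathring t))\subseteq W^\TT$.

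Now I would define the transported data. Put $\Trho'(g(\tx)):=\Trho(\tx)$, $\Tnu'(g(\tx)):=\Tnu(\tx)$ for $\tx\in\Tvar(\mathring t)$; put $\Trho'(h(z)):=\Trho(z)$, $\Tnu'(h(z)):=\Tnu(z)$ and $C'(h(z)):=C(z)$ for $z\in Y$; and let $\Trho',\Tnu',C'$ agree with $\Trho,\Tnu,C$ on every other argument, in particular on $\var(p)$, which gives the required consistency. Injectivity of $g,h$ together with $h(Y)\cap\var(p)=\emptyset$ makes these well-defined closed substitutions. The composite substitution that first renames by $(g,\pi)$ and then applies $\Trho'$ agrees with $\Trho$ on all variables of $\mathring t$, so a routine substitution-composition argument gives $\mathring t'[\Trho']=\mathring t[\Trho]$, and symmetrically $\mathring t'[\Tnu']=\mathring t[\Tnu]$; that these instances are defined at all follows once $C'$-ep-bisimilarity is in hand, via \obs{transition bisimilarity property}.

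The hard part will be verifying that $\Trho',\Tnu'$ are $C'$-ep-bisimilar on $\mathring t'$, i.e.\ that every clause of \df{relating closed substitutions} transfers. For the process-expression clause, each variable of a process expression of $\mathring t'$ is either some $x\in\var(p)$, where the triple is literally unchanged, or some $h(z)$ with $z\in Y$, where $(\Trho'(h(z)),\Tnu'(h(z)),C'(h(z)))=(\Trho(z),\Tnu(z),C(z))$ inherits membership in an ep-bisimulation from the corresponding target-variable expression of $\mathring t$. For the transition-variable clauses, each $\tx'\in\Tvar(\mathring t')$ is some $g(\tx)$ with $\beta_{\mathring t'}(\tx')=(x\goto{a}\pi(\Vtar[\mathring t](\tx)))$, and unfolding the definitions reduces the label condition, the target equalities $\Trho'(\pi(\Vtar[\mathring t](\tx)))=\target(\Trho'(\tx'))$ (and likewise for $\Tnu'$), the membership $(\Trho'(\tx'),\Tnu'(\tx'))\in C'(x)$, and the two successor-matching conditions, which relate $C'(x)=C(x)$ to $C'(\pi(\Vtar[\mathring t](\tx)))=C(\Vtar[\mathring t](\tx))$, to exactly the corresponding conditions for $\tx$ under the hypothesis. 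The one genuinely delicate point is confirming that the source $x$ of each proof-variable literal is a \emph{free} process variable of $\mathring t$, so that $\pi$'s action on it is already determined by the split $\var(p)\djcup Y$; this holds because premise sources trace back to free subterms of $\Osrc(\mathring t)=p$, recursion subexpressions being closed in De Simone format.
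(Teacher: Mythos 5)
Your proof is correct, but it follows a genuinely different route from the paper's. The paper proves \lem{renaming open transition} by induction on the size of $\mathring{t}$, with a case distinction on the shape of $p$: for $p=x\in\Var$ the single leaf is renamed directly; for $p=\Op(p_1,\dots,p_n)$ and $p=\rec{X|S}$ the sets $W$ and $W^\TT$ are first partitioned into infinitely many infinite pieces $W_i$, $W^\TT_i$, the induction hypothesis is applied to each direct subtransition with its own piece (which makes the renamed subtransitions automatically disjoint in variables), and $\mathring{t}'$, $\Trho'$, $\Tnu'$, $C'$ are reassembled from the sub-results. You instead perform one global $\alpha$-renaming: finiteness of the proof tree (well-foundedness plus finite branching, via K\"onig's lemma) makes $\Tvar(\mathring{t})$ and the set $Y$ of target variables finite, so injections $g$, $h$ into $W^\TT$, $W$ exist, and $\Trho'$, $\Tnu'$, $C'$ are obtained by transporting $\Trho$, $\Tnu$, $C$ backwards along $(g,h)$. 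The delicate points of this route --- that an injective renaming maps open proofs to open proofs, that the sources of the leaf literals of a standard open transition lie in $\var(p)$ (so the renaming $\pi$ fixes them), and that $h(Y)\cap\var(p)=\emptyset$ keeps the transported substitutions well defined and consistent with the originals on $\var(p)$ --- are exactly the right ones, and you identify and discharge them. In exchange for invoking K\"onig's lemma and routine facts about composition of substitutions, your argument exposes the lemma as pure $\alpha$-conversion and avoids the partition bookkeeping; the paper's induction needs no global reasoning about the tree structure and rehearses the same case analysis and assembly pattern that is reused in Parts 1 and 2 of the proof of \thm{lean congruence}, which is presumably why the authors organised it that way.
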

\begin{proof}
  Let $t := \mathring{t}[\Trho]$, $u := \mathring{t}[\Tnu]$, $\rho := \Trho{\upharpoonright}\Var$ and $\nu := \Trho{\upharpoonright}\Var$.
  We prove the lemma by induction on the size of $\mathring{t}$, applying a case distinction on the shape of $p$.
  \begin{itemize}
    \item If $p=x\in\Var$ then $\mathring{t}$ must be of the form \plat{$(\tx \dblcolon x \goto{a} y)$} where $a=\Oell(\mathring{t})$ and $y\in\Var\setminus\{x\}$.
          Pick $\tx'\in W^\TT$ and $z\in W$ and let $\mathring{t}'$ be $(\tx' \dblcolon x \goto{a} z)$.
          We can update $\rho$ to $\Trho'$ by $\Trho'[\tx']=t$ and $\Trho'(z)=\target(t)$.
          $\Tnu'$ is retrieved similarly.
          Let $C'(x)=C(x)$ and $C'(z)=C(y)$.
    \item If $p$ is of the form $\Op(p_1,\dots,p_n)$ then $\mathring{t}$'s expression must be of the form $\snr(E_1,\dots,E_n)$.
          Let $I$ be the trigger set of rules named $\snr$.
          Let $\{W_i \mid i\in I\}$ be a partition of $W$ such that each $W_i$ is infinite.
          Find $W^\TT_i$ for $i\in I$ similarly.
          By induction hypothesis, for each $i\in I$ there exist a standard open transition $\mathring{t}'_i$, two closed $\itt$-substitutions $\Trhoi',\Tnui'$ and $C'_i\!:\Var\to\Pow(\Tr\times\Tr)$
            such that $\Tvar(\mathring{t}'_i) \subseteq W_i^\TT$, the image of $\Vtar[\mathring{t}'_i]$ is a subset of $W_i$,
            $\mathring{t}'_i[\Trhoi']=E_i[\Trho]$, $\mathring{t}'_i[\Tnui']=E_i[\Tnu]$, $C'_i$ is consistent with $C$ on $\var(\Osrc(E_i))$,
            $\Trhoi',\Tnui'$ are consistent with $\Trho,\Tnu$ respectively on $\var(\Osrc(E_i))$ and are $C'_i$-ep-bisimilar on $\mathring{t}'_i$.
          Let $\mathring{t}'={\snr}(E'_1,\dots,E'_n)$, with $E'_i=\mathring{t}'_i$ for $i\in I$ and $E'_i=p_i$ otherwise.
          Again, we update $\rho$ to $\Trho'$ by $\Trho'(\tx)=\Trhoi'(\tx)$ and $\Trho'(z)=\target(\Trhoi'(\tx))$ if $\tx$ occurs in $\mathring{t}'_i$ and its target variable is $z$.
          $\Tnu'$ is retrieved similarly.
          Let $C'(x)=C(x)$ for $x\in\var(p)$ and $C'(z)=C'_i(z)$ if $z$ occurs in $\mathring{t}'_i$.
    \item The case when $p=\rec{X|S}$ is handled in the same way as above:
            $\mathring{t}$'s expression must be of the form $\RecAct(X,S,E)$ or $\RecIn(X,S,E)$ where $E$ is the direct subtransition of $\mathring{t}$ enabled at $\rec{S_X|S}$.
  \end{itemize}
  For each case, it is straightforward to check that $\mathring{t}',\Trho',\Tnu',C'$ satisfy the requirement.
\end{proof}
When we build a standard open transitions out of its direct subtransitions, it is essential to
ensure the validity of the target variable function -- cf. \df{standard open transition}.
\lem{renaming open transition} allows us to assume that the relevant subtransitions do not share any transition variables or target variables.
Thus their target variable functions will have disjoint domains.
The union of these functions will be injective, and thus a valid target variable function for the composed standard open transition.

\vspace{2mm}
\newenvironment{proofThm}[1]
  {\noindent \textbf{Proof of #1:\ }}
  {}
\begin{proofThm}{\thm{lean congruence}}
  Let $\R\subseteq S\times S\times\Pow(\Tr\times\Tr)$ be the smallest relation satisfying
    \begin{itemize}
      \item if $(p,q,R)\in\R'$ for some ep-bisimulation $\R'$ then $(p,q,R)\in\R$, and
      \item if $p\in\IP^r(\Sigma)$ and $\rho,\nu$ are two closed $\Sigma$-substitutions which are $C$-ep-bisimilar on $p$
              for some $C\!:\Var\to\Pow(\Tr\times\Tr)$, then
              $(p[\rho],p[\nu],R(p,\rho,\nu,C))\mathbin\in\R$, where
                \[R(p,\rho,\nu,C) := \left\{(\mathring{t}[\Trho],\mathring{t}[\Tnu])\left|\,
                \parbox{2.5in}{
                  $\mathring{t}\in\Oen(p)$ is standard and\\
                  $\Trho,\Tnu$ are closed
                    $\itt$-substitutions\\ that are consistent with $\rho,\nu$
                    respectively on $\var(p)$ and are $C'$-ep-bisimilar
                    on $\mathring{t}$\\ for some $C':\Var\to\Pow(\Tr\times\Tr)$\\ that is
                    consistent with $C$ on $\var(p)$
                }
                 \right.\right\}.
              \]
    \end{itemize}
  By \obs{C existence}, it suffices to show that $\R$ is an ep-bisimulation.
  That means, each tuple in $\R$ satisfies the requirements in \df{ep-bisimilarity}.

  If the tuple stems from the first clause then it must have satisfied all the requirements.

  If it stems from the second clause, it has the form $(p[\rho],p[\nu],R(p,\rho,\nu,C))$ for certain $p,\rho,\nu,C$.
     Fix $\rho,\nu,C$ and write $R(p,\rho,\nu,C)$ as $R(p)$.
  
  \vspace{1.5ex}
  \noindent
  \textbf{Part 1.}
  For each $(\mathring{t}[\Trho],\mathring{t}[\Tnu])\in R(p)$,
    \obs{transition bisimilarity property} tells us that $\Trho,\Tnu$ match $\mathring{t}$;
    \obs{transition composition} further confirms the validity of the substitution instances and informs us of their sources,
    i.e., $\source(\mathring{t}[\Trho\!])\mathbin=\Osrc(\mathring{t})[\Trho]=p[\Trho]=p[\rho]$ and similarly $\source(\mathring{t}[\Tnu])=p[\nu]$.
  Thus $R(p)\subseteq\en(p[\rho])\times\en(p[\nu])$.

  By symmetry, we (only) need to show that $\forall t\in\en(p[\rho]).~ \exists\, u.~ (t,u)\in R(p) \land \ell(t)=\ell(u)$.
  Instead of proving it for a specific $p$, below we prove it for all open process expressions $p$
  such that $\rho,\nu$ are $C$-ep-bisimilar on $p$.

  It suffices to find, for each pair $(p,t)$, a standard open transition $\mathring{t}\in\Oen(p)$, two closed $\itt$-substitutions $\Trho,\Tnu$
    and $C':\Var\to\Pow(\Tr\times\Tr)$, such that $\mathring{t}[\Trho]=t$, $C'$ is consistent with $C$ on $\var(p)$,
    $\Trho,\Tnu$ are consistent with $\rho,\nu$ respectively on $\var(p)$ and are $C'$-ep-bisimilar on $\mathring{t}$.
  We proceed by induction on the size of $t$, applying a case distinction on the shape of $p$.
  \begin{itemize}
    \item If $p=x\in\Var$ then $\mathring{t}$ must be of the form \plat{$(\tx \dblcolon x \goto{a} y)$} where $a=\ell(t)$ and $y\in\Var\setminus\{x\}$.
          We can update $\rho$ to $\Trho$ by $\Trho(\tx)=t$ and $\Trho(y)=\target(t)$.
          $\Tnu$ can be found by letting $\Tnu(\tx)=u$ and $\Tnu(y)=\target(u)$ for some $u$ with $(t,u)\in C(x)$.
          Let $C'(x)=C(x)$ and $C'(y)=R_y$ for some $R_y$ with $(\target(t),\target(u),R_y)$ in some ep-bisimulation and
            \begin{itemize}
              \item if $(v,w)\in C(x)$ and $v \leadsto_t v'$ then $\exists\, w'.~ w \leadsto_u w' \land (v',w')\in R_y$, and
              \item if $(v,w)\in C(x)$ and $w \leadsto_u w'$ then $\exists\, v'.~ v \leadsto_t v' \land (v',w')\in R_y$.
            \end{itemize}
          By \df{relating closed substitutions}, we know that $(\rho(x),\nu(x),C(x))$ is a member of some ep-bisimulation, and thus the existence of $R_y$ is guaranteed by \df{ep-bisimilarity}.
    \item If $p$ is of the form $\Op(p_1,\dots,p_n)$ then $t$'s expression must be of the form $\snr(E_1,\dots,E_n)$.
          Let $I$ be the trigger set of rules named $\snr$.
          For each $i\in I$, by \df{transition expression}, $E_i$ is a direct subtransition of $t$, which is a transition enabled at $p_i[\rho]$.
          Then by the induction hypothesis, we can find the corresponding $\mathring{t}_i,\Trhoi,\Tnui,C'_i$.
          By \lem{renaming open transition}, we assume that the $\mathring{t}_i$ for $i\in I$ only use process variables in $\Var\setminus\var(p)$ as target variables,
            and do not share any common transition variables and target variables.
          This ensures that $\mathring{t}=\snr(E'_1,\dots,E'_n)$, with $E'_i=\mathring{t}_i$ for $i\in I$ and $E'_i=p_i$ otherwise, is a standard open transition.
          We update $\rho$ to $\Trho$ by $\Trho(\tx)=\Trhoi(\tx)$ and $\Trho(y)=\target(\Trhoi(\tx))$ if $\tx$ occurs in $\mathring{t}_i$ and its target variable is $y$.
          $\Tnu$ is retrieved similarly.
          Let $C'(x)=C(x)$ for $x\in\var(p)$ and $C'(y)=C'_i(y)$ if $y$ occurs in $\mathring{t}_i$.
    \item The case when $p=\rec{X|S}$ is handled in the same way as above:
            $t$'s expression must be of the form $\RecAct(X,S[\rho\setminus V_S],E)$ or $\RecIn(X,S[\rho\setminus V_S],E)$, where $E$ is the direct subtransition of $t$ enabled at $\rec{S_X|S}[\rho]$.
  \end{itemize}

  \newcommand{\W}{\altmathcal{W}} 
  \noindent
  \textbf{Part 2.}
  Recall that we have fixed $\rho,\nu,C$ and $R(p)$ denotes $R(p,\rho,\nu,C)$.
  We will show that, for all $p$, $v$ and $w$, if $(v,w)\in R(p)$ then there exist a process expression $p^\W$, two closed $\Sigma$-substitutions $\rho^\W,\nu^\W$ and $C^\W:\Var\to\Pow(\Tr\times\Tr)$, such that
    \begin{enumerate}[label=\textbf{(\alph*)}, left=0pt]
      \item $\rho^\W$ and $\nu^\W$ are $C^\W$-ep-bisimilar on $p^\W$,
      \item $p^\W[\rho^\W]=\target(v)$, $p^\W[\nu^\W]=\target(w)$,
      \item if $(t,u)\in R(p)$ and $t \leadsto_v t'$, then $\exists\, u.~ u \leadsto_w u' \land (t',u')\in R(p^\W,\rho^\W,\nu^\W,C^\W)$, and
      \item if $(t,u)\in R(p)$ and $u \leadsto_w u'$, then $\exists\, t.~ t \leadsto_v t' \land (t',u')\in R(p^\W,\rho^\W,\nu^\W,C^\W)$.
    \end{enumerate}
  Note that $(p^\W[\rho^\W],p^\W[\nu^\W],R(p^\W,\rho^\W,\nu^\W,C^\W))\in\R$ by definition.
  Hence this statement shows that the tuple $(p[\rho],p[\nu],R(p))$ also satisfies Part 2 of \df{ep-bisimilarity}.

  For $(v,w)\in R(p)$, define a $(p,v,w)$-\emph{witness} to be a tuple $\W := (\mathring{v},\Trho^v,\Tnu^v,C^v)$
    with $\mathring{v}\in\Oen(p)$ standard, $C^v:\Var\to\Pow(\Tr\times\Tr)$ consistent with $C$ on $\var(p)$,
    and $\Trho^v,\Tnu^v$ closed $\itt$-substitutions that are consistent with $\rho,\nu$ respectively on $\var(p)$ and are $C^v$-ep-bisimilar on $\mathring{v}$,
    such that $\mathring{v}[\Trho^v]=v$ and $\mathring{v}[\Tnu^v]=w$.
    By definition, such a witness always exists.
  Define the \emph{successor tuple} of a $(p,v,w)$-witness $\W = (\mathring{v},\Trho^v,\Tnu^v,C^v)$ to be $(p^\W,\rho^\W,\nu^\W,C^\W)$ with $p^\W := \Otar(\mathring{v})$, $\rho^\W := \Trho^v{\upharpoonright}\Var$, $\nu^\W := \Tnu^v{\upharpoonright}\Var$ and $C^\W := C^v$.
  It suffices to show that for every $(p,v,w)$-witness $\W$, its successor tuple $(p^\W,\rho^\W,\nu^\W,C^\W)$ satisfies the requirements \textbf{(a)}--\textbf{(d)}~above.
  
  Let $\W$ be an $(p,v,w)$-witness for some $(v,w)\in R(p)$.
  By \df{relating closed substitutions}, $\rho^\W$ and $\nu^\W$ are $C^\W$-ep-bisimilar on $p^\W$.
  Applying \obs{transition composition}, $\target(v) = \target(\mathring{v}[\Trho^v]) = \Otar(\mathring{v})[\Trho^v] = p^\W[\Trho^v] = p^\W[\rho^\W]$.
  Likewise $\target(w)=p^\W[\nu^\W]$.
  It remains to establish \textbf{(c)}, as \textbf{(d)} then follows by symmetry.
  We prove by induction on the size of $v$ that for each $(p,v,w)$-witness its successor tuple satisfies \textbf{(c)}.
  In this proof we apply a case distinction on the shape of $p$.
  
  Suppose $(t,u)\in R(p)$ and $t \leadsto_v t'$.
  \textit{It suffices to find a standard open transition $\mathring{t}'\in\Oen(p^\W)$, two closed $\itt$-substitutions $\Trho',\Tnu'$ and $C':\Var\to\Pow(\Tr\times\Tr)$, such that
    $\mathring{t}'[\Trho']=t'$, $u \leadsto_w \mathring{t}'[\Tnu']$, $C'$ is consistent with $C^\W$ on $\var(p^\W)$,
    $\Trho',\Tnu'$ are consistent with $\rho^\W,\nu^\W$ respectively on $\var(p^\W)$ and are $C'$-ep-bisimilar on $\mathring{t}'$.}
  The required transition $u'$ will then be $\mathring{t}'[\Tnu']$.
  \begin{itemize}
    \item If $p=x\in\Var$ then $\mathring{v}$ must be of the form \plat{$\tx \dblcolon x \goto{a} y$} where $y=p^\W\in\Var\setminus\{x\}$.
          Since $\mathring{v}[\Trho^v]=v$ and $\mathring{v}[\Tnu^v]=w$, we have $\Trho^v(\tx)=v$ and $\Tnu^v(\tx)=w$,
            so the $C^v$-ep-bisimilarity between $\Trho^v$ and $\Tnu^v$ on $\mathring{v}$ implies $(v,w)\in C^v(x)$ by \df{relating closed substitutions}.
          In the same way one derives $(t,u)\in C^{\,t}(x)$ for some $C^{\,t}:\Var\to\Pow(\Tr\times\Tr)$ that is consistent with $C$ on $\var(p)$.
          As $C^{\,t},C^v$ is consistent with $C$ on $\var(p)=\{x\}$, we have $C^{\,t}(x)=C(x)=C^v(x)$ and thus $(t,u)\in C^v(x)$.
          Thus, by \df{relating closed substitutions}, there exists a $u'$ with $u \leadsto_w u'$ and $(t',u')\in C^v(y) = C^\W(y)$.
          Let \plat{$\mathring{t}'=(\ty \dblcolon y \goto{b} z)$} where $b=\ell(t')$ and $z\in\Var\setminus\{y\}$.
          We can update $\rho^\W$ to $\Trho'$ by $\Trho'(\ty)=t'$ and $\Trho'(z)=\target(t')$.
          $\Tnu'$ is retrieved similarly.
          $C'$ can be found in the same way as in Part 1 for the case when $p=x\in\Var$.
    \item If $p$ is of the form $\Op(p_1,\dots,p_n)$ then $\mathring{v}$'s expression must be of the form $\sns(F_1,\dots,F_n)$.
          Let $I_\sns$ be the trigger set of rules named $\sns$.
          For each $i\in I_\sns$, by \df{transition expression}, $F_i$ is a standard open transition $\mathring{v}_i\in\Oen(p_i)$.
          Therefore, by definition, $(v_i,w_i) := (\mathring{v}_i[\Trho^v],\mathring{v}_i[\Tnu^v])\in R(p_i)$
            and $\W_i := (\mathring{v}_i,\Trho^v,\Tnu^v,C^v)$ is an $(p_i,v_i,w_i)$-witness.
          Thus, by induction, the successor tuple $(p^\W_i,\rho^\W,\nu^\W,C^\W)$ of $\W_i$, where $p^\W_i := \Otar(\mathring{v}_i)$, satisfies Requirement \textbf(c).

          Using that $(t,u)\in R(p)$, by definition, we obtain a $(p,t,u)$-witness $(\mathring{t},\Trho^t,\Tnu^t,C^{\,t})$.
          Applying \lem{renaming open transition}, we may assume that $\mathring{t}$ does not employ
          transition variables from $\Tvar(\mathring{v})$ or target variables that occur in $\var(\mathring{v})$.
          Hence $\var(\mathring{t}) \cap \var(\mathring{v}) = \var(p)$.
          Since $\mathring{t}\in\Oen(p)$, $\mathring{t}$'s expression must be of the form $\snr(E_1,\dots,E_n)$.
          Let $I_\snr$ be the trigger set of rules named $\snr$.
          By \df{transition expression}, for each $i\in I_\snr$, $E_i$ is a standard open transition $\mathring{t}_i\in\Oen(p_i)$ and $E_i=p_i$ otherwise.
          Therefore, for each $i\in I_\snr$, by definition, $(t_i,u_i) := (\mathring{t}_i[\Trho^t],\mathring{t}_i[\Tnu^t])\in R(p_i)$.

          The root of the proof $\pi$ of the successor literal $t \leadsto_v t'$ must apply a substitution instance of a successor rule
          \begin{equation}\label{rootRule}
              \frac{\{(\tx_i \dblcolon x_i \goto{\mathit{xa}_i} x'_i) \leadsto_{(\ty_i \dblcolon x_i \goto{\mathit{ya}_i} y'_i)} (\tz_i \dblcolon y'_i \goto{\mathit{za}_i} z'_i) \mid i\in I\}}
                  {\snr(\mathit{xe}_1,\dots,\mathit{xe}_n) \leadsto_{\sns(\mathit{ye}_1,\dots,\mathit{ye}_n)} \mathring{r}}
          \end{equation}
          in (our enrichment of the) De Simone format.
          Here $I \subseteq I_\snr \cap I_\sns$ by \df{TSSS in De Simone format}.
          This substitution instance must have the form
          \begin{equation}\label{instance}
              \frac{\{t_i \leadsto_{v_i} t'_i \mid i\in I\}}{t \leadsto_v t'} \;.
          \end{equation}
          Therefore $t_i \leadsto_{v_i} t'_i$ for $i\in I$.
          Using that the successor tuple $(p^\W_i,\rho^\W,\nu^\W,C^\W)$ of the $(p_i,v_i,w_i)$-witness $\W_i$ satisfies Requirement \textbf(c) and $(t_i,u_i)\in R(p_i)$,
            for each $i\in I$, there exists a $u'_i$ such that $u_i \leadsto_{w_i} u'_i$ and $(t'_i,u'_i)\in R(p^\W_i,\rho^\W,\nu^\W,C^\W)$.
          Thus by definition for each $i\in I$ there exists a standard open transition $\mathring{t}'_i$, two closed $\itt$-substitutions $\Trhoi',\Tnui'$ and $C'_i:\Var\to\Pow(\Tr\times\Tr)$, such that
            $\mathring{t}'_i[\Trhoi']=t'_i$, $\mathring{t}'_i[\Tnui']=u'_i$, $C'_i$ is consistent with $C^\W$ on $\var(p^\W_i)$,
            $\Trhoi',\Tnui'$ are consistent with $\rho^\W,\nu^\W$ respectively on $\var(p^\W_i)$ and are $C'_i$-ep-bisimilar on $\mathring{t}'_i$.
            Using \lem{renaming open transition}, we may assume that the transitions
            $\mathring{t}'_i$ for $i \in I$ use disjoint sets of transition variables and target variables, and
            employ no transition variables from $\Tvar(\mathring{t}) \cup \Tvar(\mathring{v})$
            or target variables from $\var(\mathring{t}) \cup \var(\mathring{v})$.

              When a substitution is applied to an object, only its restriction to the set of
              freely occurring variables matters. This allows us to unify substitutions.
              For example, since $\mathring{t},\mathring{v}$ do not have any target variables in common,
              $\var(\mathring{t}) \cap \var(\mathring{v})$ is equal to $\var(p)$,
              on which $\Trho^t$ and $\Trho^v$ are consistent with $\rho$, and thus with each other.
              Similarly, $\Tnu^t$ is consistent with $\Tnu^v$ on $\var(p)$.
              Neither do $\mathring{t},\mathring{v}$ share any transition variables.
              Thus we can construct a $\itt$-substitution $\Trho^*$ that is consistent with
              $\Trho^t$ on $\var(\mathring{t}),\Tvar(\mathring{t})$, and with $\Trho^v$ on
              $\var(\mathring{v}),\Tvar(\mathring{v})$,
                such that $\mathring{t}[\Trho^*]=\mathring{t}[\Trho^t]$ and $\mathring{v}[\Trho^*]=\mathring{v}[\Trho^v]$.
              In the same way, we can unify $\Tnu^t$ and $\Tnu^v$ into $\Tnu^*$.
              This also applies to $C^t,C^v$ -- we can construct $C^*$ consistent with $C$ on
              $\var(p)$ such that $\Trho^*,\Tnu^*$ are $C^*$-ep-bisimilar on $\mathring{t},\mathring{v}$. 
           
              In this spirit, we will create a $\itt$-substitution $\Trho'$ that is consistent with
              $\Trho^t$ on $\var(\mathring{t}),\Tvar(\mathring{t})$, with $\Trho^v$ on 
              $\var(\mathring{v}),\Tvar(\mathring{v})$ and, for each $i\in I$, with $\Trhoi'$ on
              $\var(\mathring{t}'_i),\Tvar(\mathring{t}'_i)$. Moreover, $\Trho'$ will be consistent
              with $\rho^\W$ on $\var(p^\W)$. In the same way, we unify $\Tnu^t$, $\Tnu^v$ and the
              $\Tnui'$ for $i \in I$ into a $\itt$-substitution $\Tnu'$ that is consistent
              with $\nu^\W$ on $\var(p^\W)$, and we unify  $C^t,C^v$ and the $C'_i$ into a function
              $C':\Var\to\Pow(\Tr\times\Tr)$ that is consistent with $C^\W$ on $\var(p^\W)$.
              To justify this we make the following observations.
              \begin{itemize}
                \item For each $i \in I$, $\Trhoi$ is
                        consistent with $\rho^\W$ on $\var(p^\W_i)$, $\Tnui$ with $\nu^\W$ and $C'_i$ with $C^\W$.
                \item For each $i,j\in I$, $\var(\mathring{t}'_i) \cap \var(\mathring{t}'_j)$ is equal to $\var(p^\W_i) \cap \var(p^\W_j)$,
                      on which $\Trhoi,\Trhoj$, $\Tnui,\Tnuj$ and $C'^i,C'_j$ are pairwise consistent.
                      The latter follows from the above.
                 \item Since $\rho^\W$ was defined as $\Trho^v{\upharpoonright}\Var$, $\Trho^v$ is
                      automatically consistent with $\rho^\W$ on $\var(p^\W)$. 
                      Likewise $\Tnu^v$ is consistent with $\nu^\W$ on $\var(p^\W)$, and $C^v$ with $C^\W$.
                \item For each $i\in I$, $\var(\mathring{v}) \cap \var(\mathring{t}'_i)$ is equal to $\var(p^\W_i)$, a subset of $\var(p^\W)$,
                  on which $\Trho^v,\Trhoi$, $\Tnu^v,\Tnui$ and $C^v,C'_i$ are pairwise consistent.
                      Again, the latter follows from the above.
                \item $\var(\mathring{t}) \cap \var(\mathring{v})$ is equal to $\var(p)$, on which
                  $\Trho^t,\Trho^v$, $\Tnu^t,\Tnu^v$ and $C^{\,t},C^v$ are pairwise consistent.
                \item For each $i\in I$, $\var(\mathring{t}) \cap \var(\mathring{t}'_i)$ is a subset of
                  $\var(p)\cap\var(p^\W_i)$, on which $\Trho^v,\Trhoi$, $\Tnu^v,\Tnui$ and $C^v,C'_i$ are pairwise consistent;
                 by transitivity we have that $\Trho^t,\Trhoi$, $\Tnu^t,\Tnui$ and
                 $C^{\,t},C'_i$ are pairwise consistent on this set.
                \item Among all these standard open transitions, no transition variable is shared.
              \end{itemize}
                Moreover, since $p^\W$ was defined as $\Otar(\mathring{v})$ and
                $\rho^\W$ was defined as $\Trho^v{\upharpoonright}\Var$,
                using $\var(\Otar(\mathring{v})) \subseteq \var(\mathring{v})$, any substitution
                consistent with $\Trho^v$ on $\var(\mathring{v})$ -- in particular $\Trho'$ --
                is automatically consistent with $\rho^\W$ on $\var(p^\W)$.
                Such an observation can also be made about $\Tnu'$ and $C'$. 
         
            Thus, the announced unifications can be made indeed.
            In particular, $C'$ is consistent with $C^\W$ on $\var(p^\W)$ and $\Trho',\Tnu'$ are consistent
            with $\rho^\W,\nu^\W$ respectively on $\var(p^\W)$. It also follows that
            $\Trho',\Tnu'$ are $C'$-ep-bisimilar on $\mathring{t},\mathring{v}$
            and on all $\mathring{t}'_i$ for $i\in I$.
            So when using this choice of $\rho^\W,\nu^\W$ and $C'$, 
            it remains to find a $\mathring{t}'\in\Oen(p^\W)$ such that $\mathring{t}'[\Trho']=t'$,
            $u \leadsto_w \mathring{t}'[\Tnu']$ and $\Trho',\Tnu'$ are $C'$-ep-bisimilar on $\mathring{t}'$.

  For simplicity, let us first deal with the case that $\mathring{r}$ contains no other
            variables than $x_i$ for $i=1,\dots,n$, $\tx_i$ for $i\in I_\snr$, $\ty_i$ for $i\in I_\sns$,
            $\tz_i$ for $i\in I$, and  $y'_i$ for $i\in I_\sns$.
            Let $\mathring{t}'$ be obtained from $\mathring{r}$ by substituting (for all appropriate $i$)
            $p_i$ for $x_i$,
            $\mathring{t}_i$ for $\tx_i$,
            $\mathring{v}_i$ for $\ty_i$,
            $\mathring{t}'_i$ for $\tz_i$, and
            $\Otar(\mathring{v}_i)$ for $y'_i$.
            Then $\mathring{t}'\in\Oen(p^\W)$, $\mathring{t}'[\Trho']=t'$, $\Trho',\Tnu'$ are $C'$-ep-bisimilar on $\mathring{t}'$, and we obtain
            \begin{equation}\label{intermediate}
              \frac{\{\mathring{t}_i  \leadsto_{\mathring{v}_i} \mathring{t}'_i \mid i\in I\}}
                   {\mathring{t} \leadsto_{\mathring{v}} \mathring{t}'}
            \end{equation}
            as a substitution instance of (\ref{rootRule}).
            Moreover (\ref{instance}) is obtained by applying the substitution $\Trho'$ to (\ref{intermediate}),
            while applying $\Tnu'$ to (\ref{intermediate}) yields
          \begin{equation}\label{instanceRight}
              \frac{\{u_i \leadsto_{w_i} u'_i \mid i\in I\}}{u \leadsto_v u'} 
          \end{equation}
            with $u' :=  \mathring{t}'[\Tnu']$. Since we had already that $u_i \leadsto_{w_i} u'_i$ 
            for $i\in I$, it follows that $u \leadsto_v u'$.

 	  In the more general case, $\mathring{r}$ may additionally contain transition variables
          $\tz_i \dblcolon x_i \goto{\mathit{za}_i} z'_i$ with $i \in \{1,...,n\}{\setminus}(I_\snr \cup I_\sns)$
          and $\tz_i \dblcolon y'_i \goto{\mathit{za}_i} z'_i$ with $i \in I_\sns{\setminus} I$.
          The transition $t'$ results from substituting transitions $\chi_i$ for these variables $\tx_i$,
          in addition to the already described substitution (for all appropriate $i$) of
            $p_i[\Trho']$ for $x_i$,
            $\mathring{t}_i[\Trho']$ for $\tx_i$,
            $\mathring{v}_i[\Trho']$ for $\ty_i$,
            $\mathring{t}'_i[\Trho']$ for $\tz_i$, and
            $\Otar(\mathring{v}_i)[\Trho']$ for $y'_i$.
          We have that $\chi_i \in \en(p^\W_i[\Trho']) = \en(p^\W_i[\rho^\W])$ for all $\tz_i$ occurring in $\mathring{r}$.
          Here $p^\W_i := \Otar(\mathring{v}_i)$ for $i\in I_\sns$ and $p^\W_i := p_i$ otherwise.
          For any transition $\chi_i$ enabled at $p^\W_i[\rho^\W]$, following the result of
          Part 1, and using that $\rho^\W,\nu^\W$ are two closed $\Sigma$-substitutions which are
          $C^\W$ bisimilar on $p^\W$, we can find a standard open transition $\mathring{\chi}_i\in\Oen(p^\W_i)$, two closed
          $\itt$-substitutions ${\Trho^\chi}_i,{\Tnu^\chi}_i$ and $C^\chi_i:\Var\to\Pow(\Tr\times\Tr)$ such
          that $\mathring{\chi}_i[{\Trho^\chi}_i]=\chi_i$, $C^\chi_i$ is consistent with $C^\W$ on $\var(p^\W_i)$, 
          ${\Trho^\chi}_i,{\Tnu^\chi}_i$ are consistent with $\rho^\W,\nu^\W$ respectively on $\var(p^\W_i)$
          and are $C^\chi_i$-ep-bisimilar on $\mathring{\chi}_i$.

            Using \lem{renaming open transition}, we may assume that the transitions
            $\mathring{\chi}_i$ use disjoint sets of transition variables and target variables, and
            employ no transition or target variables that are used in the application of $\Trho'$
            described earlier. The only variables these transition have in common, or share
            with $\mathring{t}$, $\mathring{v}$ or the $\mathring{t}_i$, are those in $\var(p^\W)$.
            Consequently, all ${\Trho^\chi}_i$ can be unified with $\Trho'$ into $\Trho''$,
            and in the same vain we obtain $\Trho''$ and $C''$.

            Let $\mathring{t}'$ be obtained from $\mathring{r}$ by substituting (for all appropriate $i$)
            $p_i$ for $x_i$,
            $\mathring{t}_i$ for $\tx_i$,
            $\mathring{v}_i$ for $\ty_i$,
            $\mathring{t}'_i$ for $\tz_i$ with $i \in I$,
            $\Otar(\mathring{v}_i)$ for $y'_i$, and now also
            $\mathring{\chi}_i$ for $\tz_i$ with $i \notin I$.
            Then $\mathring{t}'\in\Oen(p^\W)$, $\mathring{t}'[\Trho'']=t'$,
            $C''$ is consistent with $C^\W$ on $\var(p^\W)$,
            $\Trho'',\Tnu''$ are consistent with $\rho^\W,\nu^\W$ respectively on $\var(p^\W)$ and are $C''$-ep-bisimilar on
            $\mathring{t}'$, and once more we obtain (\ref{intermediate}) as a substitution instance of (\ref{rootRule}).
            Moreover (\ref{instance}) is obtained by applying the substitution $\Trho''$ to (\ref{intermediate}),
            while applying $\Tnu''$ to (\ref{intermediate}) yields (\ref{instanceRight})
            with $u' :=  \mathring{t}'[\Tnu'']$. Since we had already that $u_i \leadsto_{w_i} u'_i$ 
            for $i\in I$, it again follows that $u \leadsto_v u'$.
  
    \item The case where $p=\rec{X|S}$ is handled in the same way as above.\qed
  \end{itemize}
\end{proofThm}

\section{Recursive Definition Principle}
The \emph{Recursive Definition Principle (RDP)} \cite{BW90} says that each recursive specification has a solution.
  This is a desirable property of process algebras equipped with a semantic equivalence. In settings with a recursion
  construct $\rec{X|S}$, RDP is often formulated as the statement that the recursive $\rec{X|S}$ itself is such a solution.
  Here we show that RDP holds for enabling preserving bisimilarity in any De Simone language. 

\begin{observation}\label{obs:target of indicator transition}\upshape
  Fix a TSS $\tss$ in De Simone format.
  If $\ell(t)\in\Lab\setminus\Act$ then $\target(t)=\source(t)$.
\end{observation}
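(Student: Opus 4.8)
The plan is to argue by induction on the structure of the proof tree that constitutes the transition $t$, performing a case analysis on the rule instantiated at its root. Recall from \df{associated LTS} that $t$ is a proof of a closed literal $\source(t) \goto{\ell(t)} \target(t)$, and that in a TSS in De Simone format every rule is either one of the two recursion rules $\myrefA{recAct}$, $\myrefA{recIn}$ or a De Simone rule. The entire argument is essentially bookkeeping around the last clause of \df{TSS in De Simone format}, which was engineered precisely to force this conclusion.

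First I would dispose of the recursion rules. Since $\ell(t)\in\Lab\setminus\Act$, the root cannot instantiate $\myrefA{recAct}$, whose conclusion carries a label $\alpha\in\Act$. It must therefore instantiate $\myrefA{recIn}$, whose conclusion has the shape $\rec{X|S} \goto{\ell} \rec{X|S}$; applying the matching substitution makes source and target syntactically identical, so $\target(t)=\source(t)$ is immediate and no appeal to the induction hypothesis is needed.

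The substantive case is when the root instantiates, via some substitution $\sigma$, a De Simone rule $\frac{\{x_i \goto{a_i} y_i \mid i\in I\}}{\Op(x_1,\dots,x_n) \goto{a} q}$ with $a=\ell(t)\in\Lab\setminus\Act$. Here I invoke the second bullet of \df{TSS in De Simone format}: because $a\in\Lab\setminus\Act$, we have $a_i\in\Lab\setminus\Act$ for every $i\in I$ and $q=\Op(z_1,\dots,z_n)$ with $z_i=y_i$ for $i\in I$ and $z_i=x_i$ otherwise. For $i\notin I$ we have $\sigma(z_i)=\sigma(x_i)$ by definition. For $i\in I$, the $i$th premise instantiates to the direct subproof $t_i$, a genuine closed transition of strictly smaller size with $\ell(t_i)=a_i\in\Lab\setminus\Act$; the induction hypothesis then yields $\source(t_i)=\target(t_i)$, that is $\sigma(x_i)=\sigma(y_i)=\sigma(z_i)$. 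Hence $\sigma(z_i)=\sigma(x_i)$ for all $i$, and therefore $\target(t)=\Op(\sigma(z_1),\dots,\sigma(z_n))=\Op(\sigma(x_1),\dots,\sigma(x_n))=\source(t)$.

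I do not expect any genuine obstacle. The only point deserving a line of care is checking that the induction hypothesis actually applies to the subproofs $t_i$ -- that each is a bona fide (strictly smaller) transition and that its label lies in $\Lab\setminus\Act$ -- but both facts are supplied directly by the De Simone side condition. In short, the content of the observation has been pre-packaged into the format, and the proof merely unfolds it.
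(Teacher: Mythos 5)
Your proof is correct, and it is exactly the argument the paper leaves implicit: the paper states this as an unproved observation, having already remarked after \df{TSS in De Simone format} that the last clause of the format ``guarantees'' the property. Your induction on the proof tree, with the case split between \myrefA{recIn} and De Simone rules and the use of the $a\in\Lab\setminus\Act$ clause to propagate the claim to the subproofs $t_i$ (which are indeed closed, strictly smaller transitions, since every $x_i$ and every $y_i$ with $i\in I$ occurs in the closed conclusion), simply spells out that guarantee.
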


\begin{observation}\label{obs:successor of indicator transition}\upshape
  Fix a TSSS $\tsss$ in De Simone format.
  If $\ell(u)\in\Lab\setminus\Act$ and $t \leadsto_u v$ then $v=t$.
\end{observation}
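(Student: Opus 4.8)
The plan is to argue by induction on the size of the proof of the closed $\itt$-literal $t \leadsto_u v$ from $\UU$, which exists because $t \leadsto_u v$ holds (\df{associated LTSS}). Its root instantiates, via a closed $\itt$-substitution $\Tsigma$, one of the successor rules of \df{TSSS in De Simone format}. I would first eliminate the recursion successor rules: their executed transition (the subscript of $\leadsto$ in the conclusion) is always named by a $\RecAct$-expression, so by \df{transition expression} its label lies in $\Act$. Since labels are preserved by closed substitution (\obs{transition composition}), the hypothesis $\ell(u)\notin\Act$ rules this out, and the root must instantiate one of the other successor rules, whose conclusion has the shape $\snr(\mathit{xe}_1,\dots,\mathit{xe}_n) \leadsto_{\sns(\mathit{ye}_1,\dots,\mathit{ye}_n)} \mathring{v}$, with $t = \snr(\mathit{xe}_1,\dots,\mathit{xe}_n)[\Tsigma]$ and $v = \mathring{v}[\Tsigma]$.

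Because $\ell(u) = \Oell(\sns(\mathit{ye}_1,\dots,\mathit{ye}_n)) \in \Lab\setminus\Act$ by \obs{transition composition}, the last clause of \df{TSSS in De Simone format} applies and pins down $\mathring{v} = \snr(\mathit{ze}_1,\dots,\mathit{ze}_n)$ -- the same head constructor $\snr$ as $t$ -- while also forcing $\mathit{ya}_i\in\Lab\setminus\Act$ for $i\in I$ and, for $i\notin I$, that either $\mathit{xe}_i = x_i$ or $\mathit{ye}_i = x_i$. As substitution distributes over $\snr$, it then suffices to prove $\mathit{ze}_i[\Tsigma] = \mathit{xe}_i[\Tsigma]$ for each $i\in\{1,\dots,n\}$, whence $v = t$ follows component-wise. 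I would do this by the three-way case split defining $\mathit{ze}_i$:
\begin{itemize}
  \item For $i\in I$ the root carries a strictly smaller premise $t_i \leadsto_{v_i} t'_i$ with $t_i = \mathit{xe}_i[\Tsigma]$, $t'_i = \mathit{ze}_i[\Tsigma]$ and $\ell(v_i) = \mathit{ya}_i\notin\Act$; the induction hypothesis gives $t'_i = t_i$.
  \item For $i\notin I$ with $\mathit{ye}_i = x_i$ one has $\mathit{ze}_i = \mathit{xe}_i$ by definition, so the equality is immediate.
  \item For $i\notin I$ with $\mathit{ye}_i \neq x_i$ one has $\mathit{xe}_i = x_i$, $\mathit{ze}_i = y'_i$, and $i$ lies in the trigger set $I_\sns$ of the rule named $\sns$. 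Since that rule's conclusion carries the indicator label $\ell(u)\notin\Act$, the last clause of \df{TSS in De Simone format} makes its $i$-th premise label $\mathit{ya}_i\notin\Act$, so the $i$-th direct subtransition $w_i = \mathit{ye}_i[\Tsigma]$ of $u$ is itself an indicator transition. By \obs{target of indicator transition}, $\target(w_i) = \source(w_i)$, i.e.\ $y'_i[\Tsigma] = x_i[\Tsigma]$, which is exactly $\mathit{ze}_i[\Tsigma] = \mathit{xe}_i[\Tsigma]$.
\end{itemize}

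Notice the induction hypothesis is invoked only in the first case, and it applies precisely because the side condition $\mathit{ya}_i\in\Lab\setminus\Act$ keeps the executed transition $v_i$ of the smaller successor literal an indicator transition; the base case (empty $I$) needs only the remaining two bullets. I expect the main difficulty to be bookkeeping rather than conceptual: one must read off, via \obs{transition composition}, how $\Tsigma$ sends the rule's process- and transition-variable expressions to the sources, targets and labels of the components of $t$, $u$ and $v$, and in particular confirm in the third case that $y'_i$ is genuinely the target of the $i$-th subtransition of $u$. Once these identifications are fixed the cases are short calculations, and discarding the recursion rules at the start is what keeps the induction clean.
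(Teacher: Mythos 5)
Your handling of the non-recursion successor rules is correct, and it usefully fleshes out what the paper leaves implicit (the paper offers no proof of this Observation; its justification is the remark after \df{TSSS in De Simone format} that the last clause is ``simply to ensure'' the property). In particular your first bullet correctly confines the induction hypothesis to premises with $\mathit{ya}_i\in\Lab\setminus\Act$, and your third bullet --- deriving $y'_i[\Tsigma]=x_i[\Tsigma]$ from the last clause of \df{TSS in De Simone format} together with \obs{target of indicator transition} and the matching condition of \df{transition substitution} --- is exactly the right argument. The gap is in your opening move: you may not discard the recursion successor rules. \df{TSSS in De Simone format} requires such a rule for \emph{every} $\mathit{ya}\in\Lab$, including $\mathit{ya}\in\Lab\setminus\Act$, and it explicitly stipulates the conclusion's right-hand side in that case: $\mathring{u}=\mathit{rec}_\chi(X,S,\tx \dblcolon S_X\goto{\mathit{xa}} x')$, i.e.\ the left-hand side itself. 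The $\RecAct$ in the subscript of the displayed rule is a slip of the paper: taken literally it would name no open transition when $\mathit{ya}\notin\Act$, making the ``otherwise'' branch defining $\mathring{u}$ vacuous; the intended constructor there is $\RecIn$, as confirmed by the proof of \thm{rdp}, which explicitly instantiates a recursion successor rule with subscript $\RecIn(X,S,\ty\dblcolon S_X\goto{\ell(w)}y')$ and invokes the very observation you are proving on transitions derived by it. So derivations $t\leadsto_u v$ with $\ell(u)\notin\Act$ whose root is a recursion successor rule do exist --- e.g.\ in \ABCdE, where a recursive call can emit a signal or discard a broadcast, the De Simone instances of rule \myrefS{1} of type $\rec{X|S}$ are precisely such rules --- and your case analysis does not cover them: there $t$ is of the form $\RecAct(X,S,\ldots)$ or $\RecIn(X,S,\ldots)$, not $\snr(\mathit{xe}_1,\dots,\mathit{xe}_n)$ for a transition constructor $\snr$.

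The repair is short and needs no induction: if the root instantiates a recursion successor rule and $\ell(u)\notin\Act$, then (since a $\RecAct$-named executed transition would force $\ell(u)=\mathit{ya}\in\Act$) the rule must be one with $\mathit{ya}\in\Lab\setminus\Act$, and for those the format itself dictates that the conclusion reads $\mathit{rec}_\chi(X,S,\tx\dblcolon S_X\goto{\mathit{xa}}x') \leadsto \mathit{rec}_\chi(X,S,\tx\dblcolon S_X\goto{\mathit{xa}}x')$, whence $v=\mathring{u}[\Tsigma]=t$ immediately. With this extra case added, your induction goes through as written.
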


\begin{theoremc}{RDP}\label{thm:rdp}\upshape
  If $S$ is a recursive specification such that $X\in V_S$ and $\var(S_Y)\subseteq V_S$ for all $Y\in V_S$,
    then $\rec{X|S} \bisep \rec{S_X|S}$.
\end{theoremc}
\begin{proof}
  Let $\R\subseteq S\times S\times\Pow(\Tr\times\Tr)$ be the smallest relation satisfying
    \begin{itemize}
      \item if $(p,q,R)\in\R'$ for some ep-bisimulation $\R'$ then $(p,q,R)\in\R$, and
      \item if $S$ is a recursive specification such that $X\in V_S$ and $\var(S_Y)\subseteq V_S$ for all $Y\in V_S$, then\linebreak
              $(\rec{X|S},\rec{S_X|S},R_{S,X})\in\R$ where
              \[\begin{aligned}
                R_{S,X} := & \{(\RecAct(X,S,t),t)\mid t\in\en(\rec{S_X|S}) \land \ell(t)\in\Act\} \cup {} \\
                           & \{(\RecIn(X,S,u),u)\mid u\in\en(\rec{S_X|S}) \land \ell(u)\in\Lab\setminus\Act\}.
              \end{aligned}\]
    \end{itemize}
  It suffices to show that $\R$ is an ep-bisimulation.
  That means, each tuple in $\R$ satisfies the requirements in \df{ep-bisimilarity}.

  If the tuple stems from the first clause then it must have satisfied all the requirements.

  If it stems from the second clause, it has the form $(\rec{X|S},\rec{S_X|S},R_{S,X})$ for certain $S,X$.
  
  \vspace{1.5ex}
  \noindent
  \textbf{Part 1.}
  By \df{transition expression},
    all transitions enabled at $\rec{X|S}$ have the form $\RecAct(X,S,t)$ or $\RecIn(X,S,u)$
    for some $t,u\in\en(\rec{S_X|S})$ with $\ell(t)\in\Act$ and $\ell(u)\in\Lab\setminus\Act$, such that
    \begin{itemize}
      \item $\RecAct(X,S,t)\in\en(\rec{X|S})$ iff $t\in\en(\rec{S_X|S}) \land \ell(t)\in\Act$, and
      \item $\RecIn(X,S,u)\in\en(\rec{X|S})$ iff $u\in\en(\rec{S_X|S}) \land \ell(u)\in\Lab\setminus\Act$.
    \end{itemize}
  Assuming the existence of corresponding transitions, \df{TSS in De Simone format} further gives us $\ell(\RecAct(X,S,t))=\ell(t)$ and $\ell(\RecAct(X,S,u))=\ell(u)$.
  Thus Part 1 of \df{ep-bisimilarity} holds.
  
  \vspace{1.5ex}
  \noindent
  \textbf{Part 2.}
  Suppose $(v,w)\in R_{S,X}$.
  If $\ell(v)=\ell(w)\in\Act$ then $v=\RecAct(X,S,w)$.
  From \dfs{\ref{df:transition expression} and \ref{df:TSS in De Simone format}} we have $\target(v)=\target(\RecAct(X,S,w))=\target(w)$.
  Therefore, $(\target(v),\target(w),{\it Id}_{\target(v)})\in\R$ with
  ${\it Id}_{\target(v)}=\{(t',t') \mid t'\in\en(\target(v))\}$ by the first
  clause in the definition of $\R$. (See $\R_{\it id}$ in the reflexivity proof
    of~\cite[Proposition~10]{GHW21ea}.)
  
  If $(t,u)\in R_{S,X}$,
    then $t=\mathit{rec}_\chi(X,S,u)$ where $\mathit{rec}_\chi{=}\RecAct$ if $\ell(t){=}\ell(u)\in\Act$ and $\mathit{rec}_\chi{=}\RecIn$ otherwise.
  \begin{itemize}
    \item Suppose $t \leadsto_v t'$ for some $t'$.
          By \df{TSSS in De Simone format}, the root of the proof $\pi$ of the successor literal $t \leadsto_v t'$ must apply a substitution instance
            \[
              \frac{u \leadsto_w t'}{t \leadsto_v t'}
            \]
            of the successor rule
            \[
              \frac{(\tx \dblcolon S_X \goto{\ell(u)} x') \leadsto_{(\ty \dblcolon S_X \goto{\ell(w)} y')} (\tz \dblcolon y' \goto{\ell(t')} z')}{\mathit{rec}_\chi(X,S,\tx \dblcolon S_X \goto{\ell(u)} x') \leadsto_{\RecAct(X,S,\ty \dblcolon S_X \goto{\ell(w)} y')} (\tz \dblcolon y' \goto{\ell(t')} z')}.
            \]
          in (our enrichment of the) De Simone format.
          By deleting the root node of $\pi$, we obtain a (sub)proof of the successor literal $u \leadsto_w t'$.
          Thus $u \leadsto_w t'$ and $(t',t')\in {\it Id}_{\target(v)}$.
    \item Suppose $u \leadsto_w u'$ for some $u'$.
          By applying the substitution instance 
            \[
              \frac{u \leadsto_w u'}{t \leadsto_v u'}
            \]
          of the above successor rule one obtains $t \leadsto_v u'$. Again, $(u',u')\in {\it Id}_{\target(v)}$.
  \end{itemize}
  If $\ell(v)=\ell(w)\in\Lab\setminus\Act$ then $v=\RecIn(X,S,w)$.
  By \obs{target of indicator transition}, $\target(v)=\source(v)=\rec{X|S}$; $\target(w)=\source(w)=\rec{S_X|S}$.
  Therefore, $(\target(v),\target(w),R_{S,X})\in\R$ by the second clause in the definition of $\R$.
  
  If $(t,u)\in R_{S,X}$,
    then $t=\mathit{rec}_\chi(X,S,u)$ where $\mathit{rec}_\chi{=}\RecAct$ if $\ell(t){=}\ell(u)\in\Act$ and $\mathit{rec}_\chi{=}\RecIn$ otherwise.
  \begin{itemize}
    \item Suppose $t \leadsto_v t'$ for some $t'$.
          By \df{TSSS in De Simone format}, the root of the proof $\pi$ of the successor literal $t \leadsto_v t'$ must apply a substitution instance
            \[
              \frac{u \leadsto_w u'}{t \leadsto_v t}
            \]
            of the successor rule
            \[
              \frac{(\tx \dblcolon S_X \goto{\ell(u)} x') \leadsto_{(\ty \dblcolon S_X \goto{\ell(w)} y')} (\tz \dblcolon y' \goto{\ell(u')} z')}{\mathit{rec}_\chi(X,S,\tx \dblcolon S_X \goto{\ell(u)} x') \leadsto_{\RecIn(X,S,\ty \dblcolon S_X \goto{\ell(w)} y')} \mathit{rec}_\chi(X,S,\tx \dblcolon S_X \goto{\ell(u)} x')}
            \]
          in (our enrichment of the) De Simone format.
          By deleting the root node of $\pi$, we obtain a (sub)proof of the successor literal $u \leadsto_w u'$.
          Thus $u \leadsto_w u'$.
          Moreover, from \obs{successor of indicator transition} we have $t'=t$ and $u'=u$.
          And $(t,u)\in R_{S,X}$ by assumption.
        \item Suppose $u \leadsto_w u'$ for some $u'$.
          Then $u'=u$ by \obs{successor of indicator transition}.
          By applying the substitution instance 
            \[
              \frac{u \leadsto_w u}{t \leadsto_v t}
            \]
          of the above successor rule one obtains $t \leadsto_v t$.   And $(t,u)\in R_{S,X}$ by assumption.
\qed
  \end{itemize}
\end{proof}

\end{document}